\documentclass{article}
\usepackage{euscript,amsmath,amssymb,amsfonts,graphicx,bm,amsthm,mathtools}
%\usepackage{pgfplots}
%\pgfplotsset{width=6.5cm,compat=1.3}
%\usepackage{subcaption}
%  \usepackage{paralist}
%  \usepackage{epstopdf}
%  \usepackage{graphics} 
% \usepackage[colorlinks=true]{hyperref}
  %\usepackage{subfigure}
%\numberwithin{equation}{section}

\usepackage{cite} % this makes it so numbered citation list like [1,2,3,4,5,6] would be [1-6].

  \usepackage{paralist}
  \usepackage{graphics} 
\usepackage[caption=false]{subfig}
\usepackage{soul}

\usepackage{latexsym,amssymb,enumerate,amsmath,amsthm} 
  \usepackage{graphicx} 
  \usepackage{tikz}
     \usepackage[colorlinks=false]{hyperref}
 \hypersetup{urlcolor=blue, citecolor=red}
\usepackage{latexsym,amssymb,enumerate,amsmath} 
\usepackage{pgfplots}
\usepackage{soul,xcolor}

\newtheorem{theorem}{Theorem}

\newtheorem{corollary}[theorem]{Corollary}
\theoremstyle{plain}
\theoremstyle{remark}
\newtheorem{remark}[theorem]{Remark}
\theoremstyle{definition}
\newtheorem{definition*}{Definition}

\newcommand{\dd}{\text{d}}

\def\E{\mathbb{E}}
\def\P{\mathbb{P}}
\def\R{\mathbb{R}}
\def\Z{\mathbb{Z}}

\def\thalf{t_{\textup{half}}}
\def\A{A}
\def\cv{c_{\textup{v}}}
\def\single{\textup{single}}
\def\double{\textup{double}}
\def\triple{\textup{triple}}
\def\all{\textup{all}}

\def\p{p}
\def\dev{\Delta}
\def\ka{k_{\textup{a}}}
\def\ke{k_{\textup{e}}}
\def\perf{\textup{perf}}
\def\AUC{\textup{AUC}}
\def\blue{\color{black}}

%%%%%%%%%%%%%%%%%%%%%%%%%%%

\begin{document}

% Use the \preprint command to place your local institutional report
% number in the upper righthand corner of the title page in preprint mode.
% Multiple \preprint commands are allowed.
% Use the 'preprintnumbers' class option to override journal defaults
% to display numbers if necessary
%\preprint{}

%Title of paper
\title{What should patients do if they miss a dose of medication? A theoretical approach}

% repeat the \author .. \affiliation  etc. as needed
% \email, \thanks, \homepage, \altaffiliation all apply to the current
% author. Explanatory text should go in the []'s, actual e-mail
% address or url should go in the {}'s for \email and \homepage.
% Please use the appropriate macro foreach each type of information

% \affiliation command applies to all authors since the last
% \affiliation command. The \affiliation command should follow the
% other information
% \affiliation can be followed by \email, \homepage, \thanks as well.
\author{Elijah D. Counterman\thanks{Department of Mathematics, University of Utah, Salt Lake City, UT 84112 USA.} \and Sean D. Lawley\thanks{Department of Mathematics, University of Utah, Salt Lake City, UT 84112 USA (\texttt{lawley@math.utah.edu}).}
}
\date{\today}
\maketitle

\begin{abstract}
Medication adherence is a major problem for patients with chronic diseases that require long term pharmacotherapy. Many unanswered questions surround adherence, including how adherence rates translate into treatment efficacy and how missed doses of medication should be handled. To address these questions, we formulate and analyze a mathematical model of the drug {\blue concentration} in a patient with imperfect adherence. We find exact formulas for drug {\blue concentration} statistics, including the mean, the coefficient of variation, and the deviation from perfect adherence. We determine how adherence rates translate into drug {\blue{concentration}s}, and how this depends on the drug half-life, the dosing interval, and how missed doses are handled. While clinical recommendations {\blue require extensive validation and} should depend on drug and patient specifics, as a general principle {\blue our theory suggests} that nonadherence is best mitigated by taking double doses following missed doses if the drug has a long half-life. This conclusion contradicts some existing recommendations that cite long drug half-lives as the reason to avoid a double dose after a missed dose. Furthermore, we show that a patient who takes double doses after missed doses can have at most only slightly more drug in their body than a perfectly adherent patient if the drug half-life is long. We also investigate other ways of handling missed doses, including taking an extra fractional dose following a missed dose. We discuss our results in the context of hypothyroid patients taking levothyroxine.
\end{abstract}

%%%%%%%%%%%%%%%%%%%%%%%%%%%%%%%%%%%%%%%%%%%%%%%%%%%%%%%%%%%%%%%%%%%%%%%%%%%%%%%%%%%%%%%%%%%%%%%%%%%%%%%%%%%%%%%%%%%%%%%%%%%%%%%%%%%%%%%%%%%%
\section{Introduction}

Adherence to medication{\blue s is the process by which patients take their medications as prescribed} \cite{vrijens2012}. It is well-documented that nonadherence is a major problem, resulting in over 100,000 preventable deaths and \$100 billion in preventable health care costs per year in the United States alone \cite{osterberg2005}. In fact, the World Health Organization noted that ``increasing the effectiveness of adherence interventions may have a far greater impact on the health of the population than any improvement in specific medical treatments'' \cite{who2003, haynes2002}. Nonadherence is especially prevalent and problematic in patients with chronic diseases that require long term pharmacotherapy \cite{burnier2019}. As former US surgeon general C Everett Koop famously observed, ``Drugs don't work in patients who don't take them'' \cite{lindenfeld2017}.

{\blue Medication adherence has been divided into the three phases of initiation, implementation, and discontinuation \cite{vrijens2012}. Initiation and discontinuation refer to a patient starting and stopping a regimen as prescribed (and the term ``persistence'' describes the time from initiation to discontinuation \cite{vrijens2012, geest2018}). In this paper, we focus on implementation, which is the extent to which a patient's actual dosing follows the prescribed dosing regimen \cite{vrijens2012}.} 

Many outstanding questions surround {\blue the implementation phase of} adherence and how it relates to therapeutic outcomes. Adherence is often reported as the percentage of doses of medication actually taken by the patient over a specified
time \cite{osterberg2005}. How does an adherence percentage $p$ translate into treatment efficacy? How much worse is, for example,  $p=70\%$ compared to $p=85\%$? How much adherence is needed for full treatment benefits? How can clinicians increase patient adherence? Are there protocols to increase treatment benefits in spite of poor adherence?

While the causes of nonadherence vary, a significant portion of nonadherence stems from patients simply forgetting to take their medication \cite{spilker1991, barfod2006}. What should a patient do if they miss a dose of medication? Although patients commonly ask this question, they often do not receive adequate instructions for what to do when a dose is missed \cite{howard1999, albassam2020, gilbert2002}. 

To address these questions, we formulate and analyze a mathematical model of the drug {\blue concentration} in a patient with imperfect adherence. Mathematical modeling is especially well-suited to investigate these questions, given the ethics of clinical trials that force patients to miss doses of medication. To model imperfect adherence, we assume that the patient takes their medication at only a given percentage of the prescribed dosing times. Doses are missed at random, and thus the drug {\blue concentration} in the body is random. {\blue For simplicity, we assume that the patient misses each dose with a fixed probability $1-p$, independent of their prior behavior.} We find exact mathematical formulas for statistics of this model, including the average drug {\blue concentration}, the drug {\blue concentration} coefficient of variation, and how the drug {\blue concentration} deviates from a patient with perfect adherence. These statistics are obtained as explicit functions of the adherence percentage, the drug half-life, and the prescribed dosing interval (i.e.\ the time between scheduled doses). Furthermore, we determine how these statistics depend on how the patient handles missed doses, including the case that they skip missed doses and the case that they take double doses following missed doses.

From a mathematical standpoint, the random variables that model the drug {\blue concentration} in our model generalize infinite Bernoulli convolutions \cite{peres2000, solomyak1995, peres1998, escribano2003, hu2008}. The study of infinite Bernoulli convolutions has a rich history in the pure mathematics literature, dating back to Erd\H{o}s and others in the 1930s \cite{jessen1935, kershner1935, erdos1939}. Infinite Bernoulli convolutions typically have very irregular distributions, including singular distributions supported on a Cantor set \cite{kershner1935}. Infinite Bernoulli convolutions also arose in the pharmacokinetic models in \cite{levy2013, fermin2017}. Our analysis of the generalized infinite Bernoulli convolutions that arise in our model relies on the theory of random pullback attractors \cite{Crauel01, Mattingly99, Schmalfuss96, lawley15sima, lawley2019hhg}.

From the standpoint of pharmacology, there are {\blue several results} of our analysis. First, we provide quantitative estimates of how an adherence percentage $p$ translates into statistics of drug {\blue concentration}s in the body, and how these statistics depend on the drug half-life $\thalf$, the dosing interval ${\tau}$, and how missed doses are handled. Further, these estimates show how the effects of nonadherence can be lessened by drugs with half-lives that are long compared to the dosing interval, i.e.\ $\thalf\gg {\tau}$. While clinical recommendations {\blue require extensive validation and} should depend on drug and patient specifics, as a general principle {\blue our theory suggests} that the effects of nonadherence are best mitigated by taking double doses following missed doses if $\thalf\gg {\tau}$, whereas missed doses should be skipped if $\thalf\ll {\tau}$. This conclusion contradicts some existing recommendations that cite long drug half-lives as the reason to avoid a double dose after a missed dose {\blue(for example, see recommendations for perampanel \cite{albassam2020} and valproate \cite{gilbert2002}), as well as the general recommendation that double doses should not be taken to compensate for missed doses \cite{siwale2016}}. Since double doses are sometimes avoided due to concern that they may cause toxic drug {\blue concentration}s, we provide an upper bound for the highest possible drug {\blue concentration} in the body. We find that a patient who takes double doses after missed doses can  {\blue have at most only a slightly higher drug concentration (and exposure) than} a perfectly adherent patient if $\thalf\gg {\tau}$. We also investigate other ways of handling missed doses, including taking an extra half dose following a missed dose, which we find is most appropriate when $\thalf\approx {\tau}$.

The rest of the paper is organized as follows. We formulate and analyze the mathematical model in the Methods section {\blue(details of the mathematical analysis are in the Appendix)}. In the Results section, we explore the pharmacological implications of the mathematical analysis. Since these pharmacological implications depend on rather complicated mathematics, we also provide an intuitive explanation for our results in this section. The Discussion section concludes by describing related work, model limitations, and future directions. We also discuss our results in the context of hypothyroid patients taking levothyroxine. The Appendix collects some technical points and the proofs of the theorems.

%%%%%%%%%%%%%%%%%%%%%%%%%%%%%%%%%%%%%%%%%%%%%%%%%%%%%%%%%%%%%%%%%%%%%%%%%%%%%%%%%%
\section{Methods}

\subsection{Mathematical model}\label{model}

Our model builds on the classical pharmacokinetic model of extravascular (oral) administration in a single compartment with first order kinetics \cite{gibaldi1982, bauer2015}. In the standard model, the drug concentration, $c_{0}$, in the body at time $s>0$ satisfies the ordinary differential equation (ODE),
\begin{align}\label{code}
\frac{\dd c_{0}}{\dd s}
=\ka\frac{g}{V}-\ke c_{0},
\end{align}
where $\ka$ and $\ke$ are the respective rates of absorption and elimination, $V$ is the volume of distribution, and $g$ is the drug amount at the absorption site. The amount $g$ satisfies the ODE,
\begin{align}\label{gode}
\frac{\dd g}{\dd s}
=-\ka g+I(s),
\end{align}
where $I(s)$ describes the drug input.

For most drugs administered extravascularly in conventional dosage forms, the absorption rate is much larger than the elimination rate, meaning $\ka\gg\ke$ (see \cite{gibaldi1982, peletier2017, ma2018, fillastre1987, leroy1990, strandgaarden1999}). In this parameter regime, the solution of \eqref{code} is well-approximated by the solution to
\begin{align}\label{simpler}
\frac{\dd c}{\dd s}
=\frac{I(s)}{V}-\ke c,
\end{align}
which is the standard model for intravascular administration with first order elimination. In this paper, we assume $\ka\gg\ke$ and thus consider the simpler model in \eqref{simpler} rather than the system in \eqref{code}-\eqref{gode}.

\subsubsection{Perfect adherence}

Suppose a patient is instructed to take a dose of size $D>0$ at regular time intervals of length $\tau>0$ beginning at time $0$. If the patient has perfect adherence, then the drug input is
\begin{align}\label{iperf}
I^{\perf}(s)
=DF\sum_{n\ge0}\delta(t-n\tau),
\end{align}
where $F\in(0,1]$ is the bioavailability fraction and $\delta$ denotes the Dirac delta function. Solving \eqref{simpler}-\eqref{iperf} yields the following well-known formula for the drug concentration at time $s\ge0$ in the perfectly adherent patient \cite{bauer2015},
\begin{align}\label{cstart}
c^{\perf}(s)
:=
\frac{DF}{V}\sum_{n=0}^{N(s)}e^{-\ke(s-n\tau)},
\end{align}
where $N(s)+1$ is the number of dosing times elapsed by time $s$,
\begin{align*}
N(s)
:=\sup\{n\ge0:n\le s/\tau\}.
\end{align*}
If
\begin{align*}
t=s-N(s)\tau\in[0,\tau)
\end{align*}
denotes the time elapsed since the most recent dosing time, then \eqref{cstart} can be written as
\begin{align}\label{cform}
c^{\perf}(s)
=
\alpha^{t/\tau}\frac{DF}{V}\sum_{n=0}^{N(s)}\alpha^{n},
\end{align}
where we have defined the dimensionless constant
\begin{align}\label{alphabeta}
\alpha:=e^{-\ke{\tau}}\in(0,1),
\end{align}
which is the fraction of a dose that remains in the body after one dosing interval.

If the patient continues their perfect adherence for a long time, then it is easy to see from the form in \eqref{cform} that the drug concentration approaches the following function,
\begin{align}\label{perfper}
c^{\perf}(N\tau+t)
\to C^{\perf}(t)
:=\alpha^{t/\tau}\frac{DF}{V}A^{\perf}\quad\text{as }N\to\infty,
\end{align}
where $t\in[0,\tau)$ is the time since the last dose and
\begin{align*}
A^{\perf}:=
\sum_{n\ge0}\alpha^{n}
=\frac{1}{1-\alpha}.
\end{align*}
In pharmacokinetics, it is common to measure the drug exposure over a single dosing interval by the so-called ``area under the curve,'' which for this case of perfect adherence is
\begin{align}\label{aucperf}
\AUC^{\perf}
:=\int_{0}^{\tau}C^{\perf}(t)\,\dd t
=\frac{DF}{V}\frac{1}{\ke}.
\end{align}

%%%%%%%%%%%%%%%%%%%%%%%%%%%%%%%%%%%%%%%%%%%%%%%%%%%%%%%%%%%%%%%%%%%%%%%%%%%%%%%%%%%%%%%%%%%%%%%%%%%%%%%%%%%%%%%%%%%%%%%%%%%%%%%%%%%%%%%
\subsubsection{Nonadherence}

To model patient nonadherence, we suppose that the patient occasionally misses a dose. Specifically, at each dosing time, the patient ``remembers'' to take their medication with probability $p\in(0,1)$, and the patient ``forgets'' with probability $1-p$. Mathematically, let $\{\xi_{n}\}_{n}$ be a sequence of independent and identically distributed (iid) Bernoulli random variables with parameter $p$, meaning
\begin{align}\label{xin}
\begin{split}
\xi_{n}
=\begin{cases}
1 & \text{with probability }p,\\
0 & \text{with probability }1-p.
\end{cases}
\end{split}
\end{align}
Hence, $\xi_{n}=1$ means that the patient takes their medication at the $n$th dosing time. {\blue We emphasize that $\{\xi_{n}\}_{n}$ is a sequence of independent random variables, which means that the patient misses doses independently of their prior behavior.}

If $Df_{n}\ge0$ denotes the amount taken at the $n$th dosing time, then the drug input is
\begin{align}\label{iimp}
I(s)
=DF\sum_{n\ge0}\delta(t-n\tau)f_{n},
\end{align}
and solving \eqref{simpler} with $I(s)$ in \eqref{iimp} yields the drug concentration in the patient,
\begin{align}\label{CC0}
c(s)
=
\alpha^{t/\tau}\frac{DF}{V}\sum_{n=0}^{N(s)}\alpha^{N(s)-n}f_{n}.
\end{align}
Notice that \eqref{CC0} reduces to \eqref{cform} if $f_{n}=1$ for all $n$. We take
\begin{align*}
f_{n}=0,\quad\text{if }\xi_{n}=0,
\end{align*}
which means the patient does not take any medication when they forget. However, we allow for the possibility that 
\begin{align*}
f_{n}> 1,\quad \text{if }\xi_{n}=1,
\end{align*}
which means that the patient may take more than a single dose to make up for prior missed doses. In general, we allow $f_{n}$ to be a function of the history $\{\xi_{i}\}_{i\le n}$, and we refer to a choice of $f_{n}$ as a ``dosing protocol.''

The simplest dosing protocol is for the patient to merely take a single dose if they remember, which means
\begin{align}
\label{fsingle}
f_{n}^{\textup{single}}
:=
\begin{cases}
0 & \text{if }\xi_{n}=0,\\
1 &  \text{if }\xi_{n}=1.
\end{cases}
\end{align}
We refer to \eqref{fsingle} as the ``single dose'' protocol. Another common dosing protocol is for the patient to take a double dose to make up for a missed dose at the prior dosing time, which means
\begin{align}
\begin{split}\label{fdouble}
f_{n}^{\textup{double}}
:=\begin{cases}
0 & \text{if }\xi_{n}=0,\\
1 & \text{if }\xi_{n}=1,\,\xi_{n-1}=1,\\
2 & \text{if }\xi_{n}=1,\,\xi_{n-1}=0. 
\end{cases}
\end{split}
\end{align}
We refer to \eqref{fdouble} as the ``double dose'' protocol. Notice that in the double dose protocol, the patient never takes more than two doses at a time, even if they missed more than one previous dose. Our analysis below covers other dosing protocols, but we are primarily interested in comparing the single dose and double dose protocols in \eqref{fsingle}-\eqref{fdouble}. As a technical aside, we are ultimately interested in the large time behavior of $c(s)$ in \eqref{CC0}, and thus the values of $f_{n}$ in \eqref{CC0} for small $n$ are irrelevant. In particular, the fact that the definition of $f_{0}^{\textup{double}}$ in \eqref{fdouble} depends on $\xi_{-1}$ is immaterial.

%%%%%%%%%%%%%%%%%%%%%
\begin{figure}[t]%[htp]
\centering
\includegraphics[width=\linewidth]{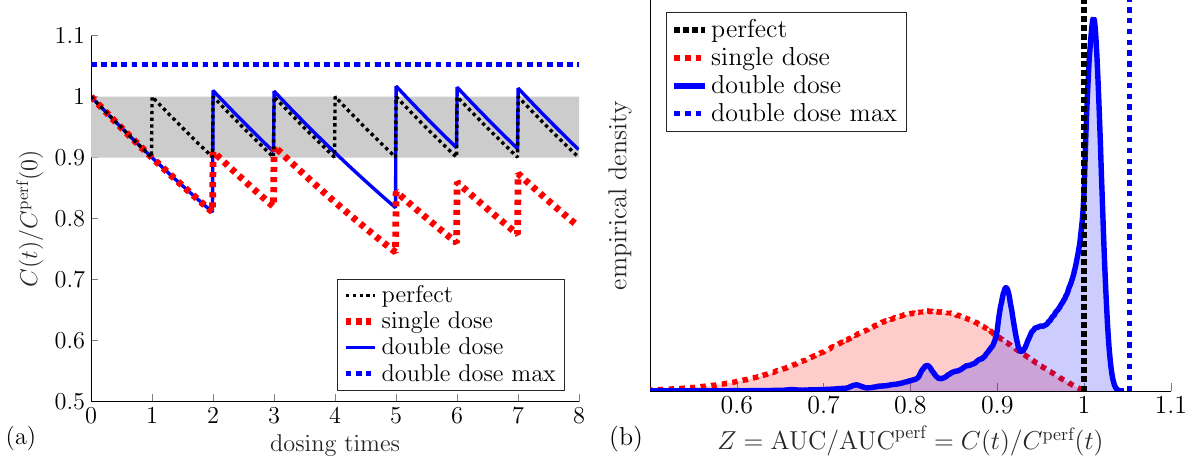}
\caption{(a) The black dotted curve depicts how the drug concentration in a perfectly adherent patient evolves in time, and the gray shaded region is the area between the peaks and troughs of the perfectly adherent patient. The red dashed curve and blue solid curve describe patients with imperfect adherence following the single dose and double dose protocols, respectively. The blue dashed line depicts the largest possible drug concentration for the double dose protocol (see {\blue\eqref{largeste}}). (b) The distribution of the relative drug {\blue concentration} $Z=\AUC/\AUC^{\perf}=C(t)/C^{\perf}(t)$ for the single dose protocol (red) and the double dose protocol (blue) obtained from stochastic simulations. The black dotted vertical line at $Z=1$ describes the perfectly adherent patient, and the blue dashed vertical line describes the largest possible drug {\blue concentration} for the double dose patient (see {\blue\eqref{largeste}}). In both plots, $p=0.8$ and $\alpha=0.9$.}
\label{figsim0}
\end{figure}
%%%%%%%%%%%%%%%%%%%%%

Figure~\ref{figsim0}a illustrates how the drug {\blue concentration} in the  body evolves in time. The black dotted curve describes the perfectly adherent patient, and the red dashed curve and blue solid curve describe patients with imperfect adherence following the single dose and double dose protocols, respectively. We set the initial drug concentration equal to $C^{\perf}(0)$ in this illustration.

%%%%%%%%%%%%%%%%%%%%%%%%%%%%%%%%%%%%%%%%%%%%%%%%%%%%%%%%%%%%%%%%%%%%%%%%%%%%%%%%%%%%%%%%%%%%%%%%%%%%%%%%%%%%%%%%%%%%%%%%%%%%%%%%%%%%%%%
\subsubsection{Large-time drug {\blue concentration} distribution}

For the case of perfect adherence, the drug concentration at large time is described by $C^{\perf}(t)$ in \eqref{perfper}. Analogously, for the case of imperfect adherence, we prove below that the drug concentration converges in distribution at large time,
\begin{align}\label{Cearly}
c(N\tau+t)
\to_{\dd}
C(t)
\quad\text{as }N\to\infty,
\end{align}
where $t\in[0,\tau)$ is the time since the last dosing time and $C(t)$ is a certain random function given below. In particular, $C(t)$ describes the drug concentration in a patient who has been taking the drug for a long time with adherence $p\in(0,1)$. Furthermore, the patient's drug exposure over a dosing interval is 
\begin{align}\label{auc}
\AUC
:=\int_{0}^{\tau}C(t)\,\dd t.
\end{align}
We emphasize that $C(t)$ and $\AUC$ are random since patient adherence is modeled by a random process.

%%%%%%%%%%%%%%%%%%%%%%%%%%%%%%%%%%%%%%%%%%%%%%%%%%%%%%%%%%%%%%%%%%%%%%%%%%%%%%%%%%%%%%%%%%%%%%%%%%%%%%%%%%%%%%%%%%%%%%%%%%%%%%%%%%%%%%%
\subsubsection{The effects of nonadherence}

We measure the effects of nonadherence by comparing the {\blue drug concentration} in a patient with imperfect adherence to the drug {\blue concentration} in a patient with perfect adherence. It is natural to quantify this in terms of the drug exposure ratio $\AUC/\AUC^{\perf}$ or the drug concentration ratio $C(t)/C^{\perf}(t)$ {\blue at some time $t\in[0,\tau)$ since the last scheduled dose}. It turns out that these two ratios are the same, as we prove below that
\begin{align}\label{Z}
Z:=
\frac{\AUC}{\AUC^{\perf}}
=\frac{C(t)}{C^{\perf}(t)}
,\quad\text{for all }t\in[0,\tau).
\end{align}
{\blue We therefore emphasize that we} study the effects of nonadherence {\blue in terms of the relative drug exposure and the relative drug concentration} by studying the single random variable $Z$. Figure~\ref{figsim0}b plots the distribution of $Z$ for the single dose protocol (red) and the double dose protocol (blue) obtained from stochastic simulations (the distribution is obtained from $10^{7}$ realizations of $c(N\tau+t)$ with $N=100$).

We study $Z$ primarily in terms of the following three statistics. First, we define the mean,
\begin{align}\label{mu}
\mu
:=\E[Z]
=\frac{\E[\AUC]}{{\AUC^{\perf}}}
{\blue=\frac{\E[C(t)]}{C^{\perf}(t)},}
\end{align}
which compares the average drug {\blue concentration} to the perfectly adherent patient. We further define the deviation,
\begin{align}\label{deviation0}
\dev
:=\sqrt{\E[(Z-1)^{2}]}
=\frac{\sqrt{\E\big[({\AUC}-{{{\AUC}^{\perf}}})^{2}\big]}}{{{{\AUC}^{\perf}}}}
{\blue=\frac{\sqrt{\E\big[(C(t)-{{C^{\perf}(t)}})^{2}\big]}}{{{C^{\perf}(t)}}}},
\end{align}
which measures how the drug {\blue concentration} deviate{\blue s} from the perfectly adherent patient. In statistics, \eqref{deviation0} is called the relative root-mean-square deviation or relative root-mean-square error. We also compute the coefficient of variation of $Z$, but we find that $\dev$ is a better measure of the effects of nonadherence. Finally, since dosing protocols in which the patient takes more than a single dose at a time may cause drug {\blue concentration}s to rise too high, another useful statistic is the largest possible drug {\blue concentration} compared to the perfectly adherent patient,
\begin{align}\label{lambda0}
\lambda
:=\sup_{\xi}Z
{\blue=\sup_{\xi}\frac{\AUC}{\AUC^{\perf}}
=\sup_{\xi}\frac{C(t)}{C^{\perf}(t)},}
\end{align}
where $\sup_{\xi}$ denotes the supremum over patterns of the patient remembering or forgetting to take their medication (i.e. $\{\xi_{n}\}_{n}$ in \eqref{xin}). {\blue We emphasize that \eqref{lambda0} means that $\lambda$ bounds both the relative drug exposure (i.e.\ $\AUC$ to $\AUC^{\perf}$) and the relative drug concentration at any time (i.e.\ $C(t)$ to $C^{\perf}(t)$).}

We point out that the statistics $\mu$, $\dev$, and $\lambda$ in \eqref{mu}-\eqref{lambda0} are dimensionless, and thus they are independent of the units used to measure drug amounts, concentrations, time, etc. Furthermore, these statistics depend only on $\alpha$ in \eqref{alphabeta}, the adherence percentage $p\in(0,1)$, and the dosing protocol $f_{n}$. {\blue We emphasize that since $\mu$, $\dev$, and $\lambda$ are defined relative to the perfectly adherent patient, their values} are unchanged if $\AUC$ and $\AUC^{\perf}$ are replaced by $C(t)$ and $C^{\perf}(t)$ for any $t\in[0,\tau)$ {\blue(as indicated by \eqref{mu}-\eqref{lambda0})}.

%%%%%%%%%%%%%%%%%%%%%%%%%%%%%%%%%%%%%%%%%%%%%%%%%%%%%%%%%%%%%%%%%%%%%%%%%%%%%%%%%%%%%%%%%%%%%%%%%%%%%%%%%%%%%%%%%%%%%%%%%%%%%%%%%%%%%%%
{\blue
\subsubsection{Single and double dose protocols}

In the Appendix, we analyze the mathematical model and compute statistics of the drug {\blue concentration}s for general dosing protocols. Here, we present the formulas for $\mu$, $\dev$, and $\lambda$ for the single and double dose protocols.

\begin{theorem}\label{maintext}
Using superscripts to denote the dosing protocol, the relative means in \eqref{mu} are
\begin{align*}
\mu^{\single}=p,\quad
\mu^{\double}=p+p(1-p).
\end{align*}
Similarly, the deviations in \eqref{deviation0} are
\begin{align}
\dev^{\single}
&=\sqrt{\frac{1-p}{1+\alpha}}\sqrt{1-\alpha  (2 p-1)},\label{es}\\
\dev^{\double}
&=\sqrt{\frac{1-p}{1+\alpha}}\sqrt{1+p+(1-7 p+2 p^2)\alpha+2  p^2(2-p) \alpha ^2}.\label{ed}
\end{align}
Finally, the largest relative drug {\blue concentration}s in \eqref{lambda0} are
\begin{align}\label{largeste}
\lambda^{\single}=1,\quad
\lambda^{\double}=\frac{2}{1+\alpha}.
\end{align}
\end{theorem}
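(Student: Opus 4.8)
The plan is to reduce all three statistics to the single random variable $Z$ in \eqref{Z} via an explicit series representation, and then read off $\mu$, $\dev$, and $\lambda$ from that representation.

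First I would establish the representation. From \eqref{CC0} and \eqref{perfper}, for each fixed $N$ one has $c(N\tau+t)/C^{\perf}(t)=(1-\alpha)\sum_{k=0}^{N}\alpha^{k}f_{N-k}$. Because $\{\xi_n\}$ is iid, the reversed finite vector $(f_N,f_{N-1},\dots,f_1)$ has the same law as $(\hat f_0,\hat f_1,\dots,\hat f_{N-1})$, where $\hat f_k$ is the protocol re-indexed onto an iid Bernoulli$(p)$ family $\{\hat\xi_k\}$: for \eqref{fsingle}, $\hat f_k=\hat\xi_k$; for \eqref{fdouble}, $\hat f_k=0$ if $\hat\xi_k=0$, $\hat f_k=1$ if $\hat\xi_k=\hat\xi_{k+1}=1$, and $\hat f_k=2$ if $\hat\xi_k=1,\hat\xi_{k+1}=0$. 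Since the $\hat f_k$ are uniformly bounded and $\alpha\in(0,1)$, the partial sums $\sum_{k=0}^{N}\alpha^{k}\hat f_k$ increase a.s.\ to a convergent series, so invoking the pullback-attractor convergence underlying \eqref{Cearly}--\eqref{Z} gives
\[
Z\stackrel{d}{=}(1-\alpha)\sum_{k\ge0}\alpha^{k}\hat f_k ,
\]
and in particular $\{\hat f_k\}_{k\ge0}$ is stationary, which is all the structure the rest of the argument needs. (The values of $f_n$ for small $n$, including the $\xi_{-1}$-dependence of $f_0^{\double}$, are irrelevant in the limit, as already noted after \eqref{fdouble}.)

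Next I would compute $\mu$ and $\dev$. By linearity, stationarity, and the geometric series, $\mu=\E[Z]=(1-\alpha)\E[\hat f_0]\sum_{k\ge0}\alpha^{k}=\E[\hat f_0]$, which equals $p$ in the single case and $1\cdot p^2+2\cdot p(1-p)=p+p(1-p)$ in the double case. For the deviation, write $\dev^{2}=\E[(Z-1)^{2}]=\operatorname{Var}(Z)+(\mu-1)^{2}$ and expand $\operatorname{Var}(Z)=(1-\alpha)^{2}\sum_{j,k\ge0}\alpha^{j+k}\gamma_{|j-k|}$ with $\gamma_{\ell}:=\operatorname{Cov}(\hat f_0,\hat f_\ell)$; summing the geometric series in the diagonal and in $\ell=|j-k|\ge1$ collapses this to $\operatorname{Var}(Z)=\frac{1-\alpha}{1+\alpha}\big(\gamma_0+2\sum_{\ell\ge1}\gamma_\ell\alpha^{\ell}\big)$. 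For the single protocol only $\gamma_0=p(1-p)$ is nonzero; for the double protocol, two dose-multipliers separated by $\ge2$ steps involve disjoint blocks of $\hat\xi$'s, so only $\gamma_0=\E[\hat f_0^{2}]-\E[\hat f_0]^{2}=(4p-3p^{2})-(2p-p^{2})^{2}$ and $\gamma_1=\E[\hat f_0\hat f_1]-\E[\hat f_0]^{2}$ survive, and $\E[\hat f_0\hat f_1]=p^{2}(2-p)$ is obtained by conditioning on the shared variable $\hat\xi_1$. Substituting and simplifying the resulting polynomial in $p$ and $\alpha$ (the bulk of the work) yields \eqref{es} and \eqref{ed}.

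Finally I would compute $\lambda=\sup_{\xi}Z=(1-\alpha)\sup_{\hat\xi}\sum_{k\ge0}\alpha^{k}\hat f_k$, the supremum over admissible binary sequences. The single case is immediate: $\hat f_k\le1$ with equality for $\hat\xi\equiv1$, so $\lambda^{\single}=1$. For the double case, $\hat f_k=2$ forces $\hat\xi_{k+1}=0$, hence $\hat f_{k+1}=0$, while $\hat\xi_{k+1}=1$ forces $\hat f_k\le1$. Writing $M:=\sup_{\hat\xi}\sum_{k\ge0}\alpha^{k}\hat f_k$ and splitting on $\hat\xi_1$: if $\hat\xi_1=1$ then $\hat f_0\le1$ and the tail is $\alpha\sum_{j\ge0}\alpha^{j}\hat f_{j+1}\le\alpha M$; if $\hat\xi_1=0$ then $\hat f_0\le2$, $\hat f_1=0$, and the tail is $\le\alpha^{2}M$. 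Hence $M\le\max\{1+\alpha M,\ 2+\alpha^{2}M\}$; the first option gives $M\le\frac{1}{1-\alpha}\le\frac{2}{1-\alpha^{2}}$ and the second gives $M\le\frac{2}{1-\alpha^{2}}$, so $M\le\frac{2}{1-\alpha^{2}}$, and the alternating pattern $\hat\xi=(1,0,1,0,\dots)$ (equivalently, long alternating windows, which occur with positive probability and make $Z$ within $O(\alpha^{L})$ of the bound) attains it, so $\lambda^{\double}=(1-\alpha)\tfrac{2}{1-\alpha^{2}}=\tfrac{2}{1+\alpha}$, giving \eqref{largeste}.

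The main obstacle is twofold. Conceptually, making the series representation rigorous is the crux: $\{f_n\}$ is not a nested sequence in $N$, so the limit must be taken in distribution through the time-reversal/pullback-attractor argument rather than pointwise. Granting that, the remaining difficulty is bookkeeping—the correlation computation and polynomial simplification behind $\dev^{\double}$ in \eqref{ed}—together with the small but genuine combinatorial point that the alternating pattern is the true maximizer for $\lambda^{\double}$.
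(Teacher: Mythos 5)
Your proposal is correct, and for the moment computations it takes a genuinely different route from the paper. The paper derives Theorem~\ref{maintext} from a general machinery: it represents the limiting concentration via the pullback series $\A=\sum_{n\ge0}\alpha^{n}f(X_{-n})$ over the Markov history process, exploits the distributional invariance $\A=_{\dd}\alpha\A+f(X_{1})$ to reduce $\E[Z^{2}]$ to the vector $u(x)=\E[\A 1_{X_{0}=x}]$, and solves the resulting linear system $u=(I-\alpha P^{\top})^{-1}v$ as in \eqref{alg2} (Theorem~\ref{thm12}), then specializes to $m=0$ and to the boost protocol with $b=1$ to get \eqref{es}--\eqref{ed} (Corollaries~\ref{cormean} and \ref{useful}). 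You instead time-reverse to a stationary, finite-range-dependent sequence $\{\hat f_{k}\}$ and compute $\operatorname{Var}(Z)=\tfrac{1-\alpha}{1+\alpha}\big(\gamma_{0}+2\sum_{\ell\ge1}\gamma_{\ell}\alpha^{\ell}\big)$ directly from the autocovariances, using that $\gamma_{\ell}=0$ for $\ell\ge2$ in the double-dose case; I checked that your $\gamma_{0}$, $\gamma_{1}$, and $\E[\hat f_{0}\hat f_{1}]=p^{2}(2-p)$ reproduce \eqref{es} and \eqref{ed} exactly. Your route is more elementary and self-contained for these two protocols; the paper's buys generality (arbitrary memory $m$, and an extension to the infinite-state history process needed for the all-dose protocol). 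For $\lambda$ the two arguments are close in spirit --- both identify the alternating pattern as the extremizer --- but your upper bound via the recursion $M\le\max\{1+\alpha M,\,2+\alpha^{2}M\}$ is a cleaner packaging of the paper's ``once at the ceiling you cannot exceed it'' argument in the proof of Theorem~\ref{largest}. The only points to tighten are routine: the identification of $\E[Z]$ and $\E[Z^{2}]$ with the moments of the limiting series requires the moment convergence the paper records in \eqref{momentconvergence} (bounded doses plus dominated convergence), and the step $M\le\max\{\cdot,\cdot\}$ should be prefaced by the observation that $M\le 2/(1-\alpha)<\infty$ so the two linear inequalities can be solved for $M$.
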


}

%%%%%%%%%%%%%%%%%%%%%%%%%%%%%%%%%%%%%%%%%%%%%%%%%%%%%%%%%%%%%%%%%%%%%%%%%%%%%%%%%%%%%%%%%%%%%%%%%%%%%%%%%%%%%%%%%%%%%%%%%%%%%%%%%%%%%%%%%%%%
\section{Results}

We now explore some pharmacological implications of the analysis above. Recall that $\alpha:=e^{-\ke\tau}$, where $\tau$ is the dosing interval and $\ke$ is the drug elimination rate. Since elimination rates are often expressed in terms of half-lives, we note that the drug half-life, $\thalf>0$, is related to the other parameters via
\begin{align}\label{alphahalf}
\alpha
&=2^{-{\tau}/\thalf},
\quad
\thalf
=\bigg(\frac{\ln\frac{1}{2}}{\ln\alpha}\bigg){\tau}
=\frac{\ln2}{\ke}.
\end{align}
Hence, in the following a ``long drug half-life'' means $\thalf$ is long compared to ${\tau}$, and thus $\alpha$ is large (i.e.\ $\alpha$ is near 1). Similarly, a ``short drug half-life'' means $\thalf$ is short compared to ${\tau}$, and thus $\alpha$ is small.

%%%%%%%%%%%%%%%%%%%%%%%%%%%%%%%%%%%%%%%%%%%%%%%%%%%%%%%%%%%%%%%%%%%%%%
\subsection{Long half-lives reduce the effects of patient nonadherence}

%%%%%%%%%%%%%%%%%%%%%
\begin{figure}[t]%[htp]
\centering
\includegraphics[width=\linewidth]{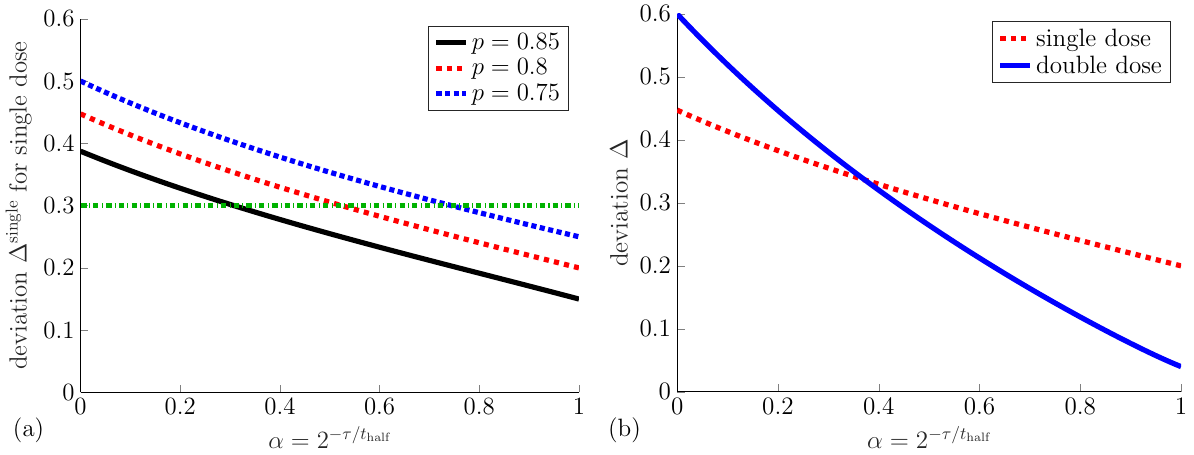}
\caption{Deviation $\dev$ from perfect adherence. Panel (a) plots the deviation $\dev^{\single}$ for the single dose protocol as a function of $\alpha$ for different adherence percentages $p$. Panel (b) compares the deviations $\dev^{\single}$ and $\dev^{\double}$ for the single and double dose protocols with $p=0.8$.
}
\label{fige}
\end{figure}
%%%%%%%%%%%%%%%%%%%%%

We begin by considering the single dose protocol. In Figure~\ref{fige}a, we plot $\dev^{\textup{single}}$ as a function of $\alpha$ for different patient adherence levels $p$. As expected, $\dev^{\single}$ decreases as the patient adherence $p$ increases. Furthermore, $\dev^{\single}$ decreases as $\alpha$ increases, and $\dev^{\single}$ approaches its minimum value as $\alpha\to1$,
\begin{align*}
\dev^{\single}\to1-p\quad\text{as }\alpha\to1.
\end{align*}
These properties can be seen from Figure~\ref{fige}a and equation~\eqref{es}.

Importantly, Figure~\ref{fige}a shows that the effect of patient nonadherence, as measured by the deviation $\dev^{\single}$ from perfect adherence, depends critically on $\alpha$. For example, notice that the horizontal line in Figure~\ref{fige}a at $\dev^{\single}=0.25$ intersects the $\dev^{\single}$ curves for the three different levels of patient adherence considered (namely, $p=0.85$, $p=0.8$, and $p=0.75$). Therefore, a patient with high adherence $p$ and small $\alpha$ and a patient with low adherence $p$ and large $\alpha$ can have the same deviation from the perfectly adherent patient.

Put another way, the effects of nonadherence can be lessened by increasing the $\alpha$ value of the drug (i.e.\ increasing the half-life $\thalf$ or decreasing the dosing interval ${\tau}$) without changing the patient's actual adherence $p$. This result is inline with previous analysis, as it is commonly noted that drugs with long half-lives tend to be more ``forgiving'' of missed doses \cite{osterberg2010}. This analysis thus quantifies drug ``forgiveness.'' For other measures of drug forgiveness, see \cite{assawasuwannakit2015, pellock2016, morrison2017}.

%%%%%%%%%%%%%%%%%%%%%%%%%%%%%%%%%%%%%%%%%%%%%%%%%%%%%%%%%%%%%%%%%%%%%%
\subsection{Double dose protocol mitigates patient nonadherence for drugs with long half-lives}\label{compare}

To compare the deviations from perfect adherence for the single dose and double dose protocols, in Figure~\ref{fige}b we plot $\dev^{\single}$ and $\dev^{\double}$ as functions of $\alpha$. This figure shows that
\begin{align}
\dev^{\single}
&<\dev^{\double}\quad\text{if $\alpha$ is small (i.e.\ short half-life)},\nonumber\\
\text{and}\quad\dev^{\double}
&<\dev^{\single}\quad\text{if $\alpha$ is large (i.e.\ long half-life)}.\label{want}
\end{align}
We set $p=0.8$ in Figure~\ref{fige}b, but other values of $p$ yield similar results. Indeed, the formulas in \eqref{es} and \eqref{ed} imply the small $\alpha$ limits,
\begin{align*}
\lim_{\alpha\to0}\dev^{\single}
=\sqrt{1-p}
< \lim_{\alpha\to0}\dev^{\double}
=\sqrt{1-p^{2}},
\end{align*}
and the large $\alpha$ limits,
\begin{align}\label{ll}
\lim_{\alpha\to1}\dev^{\double}
=(1-p)^{2}
<\lim_{\alpha\to1}\dev^{\single}
=1-p.
\end{align}
In practical terms, \eqref{ll} means that if $\alpha$ is large and the patient has adherence of $p=0.9$, then the deviation from perfect adherence is roughly 10 times smaller for the double dose protocol compared to the single dose protocol.

While we have shown \eqref{want} for large $\alpha$, it follows from \eqref{es}-\eqref{ed} that it is actually the case that
\begin{align}\label{ac}
\dev^{\double}
<\dev^{\single}\quad\text{if and only if }\alpha>\alpha_{\text{c}}
:=\frac{2}{5-2 p+\sqrt{12 (p-3) p+25}}.
\end{align}
It is straightforward to check that the critical value $\alpha_{\text{c}}$ always lies in the interval,
\begin{align*}
\alpha_{\text{c}}\in(0.2,0.5)\quad \text{for all }p\in(0,1).
\end{align*}
Therefore, $\alpha>0.5$ is a sufficient condition for $\dev^{\double}<\dev^{\single}$, and $\alpha>0.5$ is equivalent to $\thalf>{\tau}$.

%%%%%%%%%%%%%%%%%%%%%%%%%%%%%%%%%%%%%%%%%%%%%%%%%%%%%%%%%%%%%%%%%%%%%%

These results imply that if $\alpha>\alpha_{\text{c}}$, then a patient {\blue following the double dose protocol with actual adherence ${\p}$} can have the same deviation from perfect adherence as they would have by following the single dose protocol with a higher adherence $p_{+}$. {\blue We thus refer to $p_{+}$ as their ``effective adherence.''} To calculate $p_{+}$, suppose the patient has actual adherence ${\p}$. We then find the value of $p_{+}$ which satisfies
\begin{align}\label{set}
\dev^{\single}|_{p_{+}}
=\dev^{\double}|_{{\p}},
\end{align}
where  $\dev^{\single}|_{p_{+}}$ and $\dev^{\double}|_{{\p}}$ denote setting the adherence equal to $p_{+}$ and $\p$ in the respective formulas in \eqref{es}-\eqref{ed}. Solving \eqref{set} yields the ``effective adherence'' $p_{+}\in(0,1)$ as a function of $\alpha$ and the actual adherence ${\p}$. Note that \eqref{ac} implies that $p_{+}>{\p}$ if and only if $\alpha>\alpha_{\text{c}}$.

%%%%%%%%%%%%%%%%%%%%%
\begin{figure}[t]%[htp]
\centering
\includegraphics[width=\linewidth]{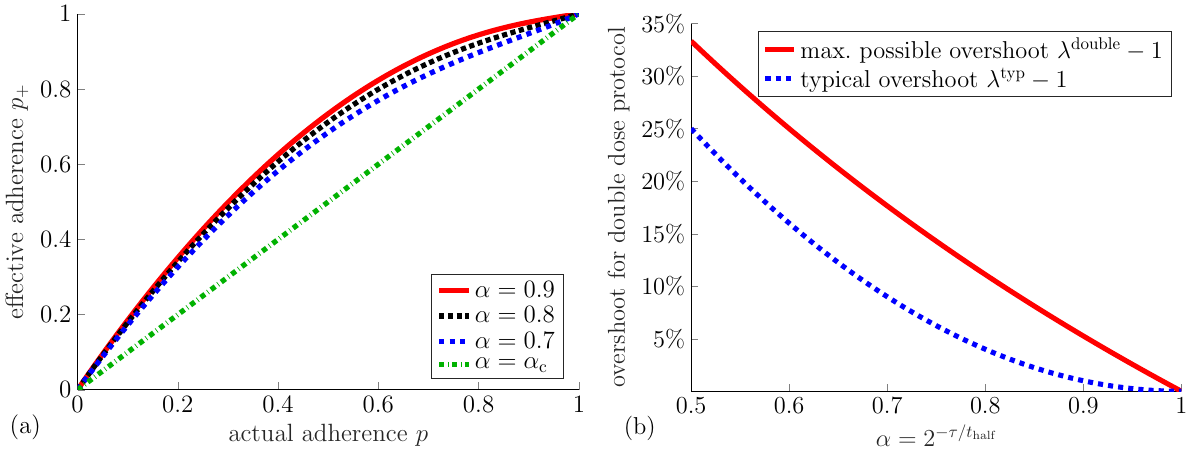}
\caption{Panel (a) plots the effective adherence $p_{+}$ obtained by following the double dose protocol rather than the single dose protocol as a function of the actual adherence $p$. For a patient following the double dose protocol, panel (b) shows how their drug {\blue concentration}s could rise above the {\blue concentration}s in a perfectly adherent patient.}
\label{figp}
\end{figure}
%%%%%%%%%%%%%%%%%%%%%

In Figure~\ref{figp}a, we plot $p_{+}$ as a function of ${\p}$ for different values of $\alpha$. This figure shows that the {\blue increase in the effective adherence obtained} by following the double dose protocol is quite substantial, especially if $\alpha$ is close to 1. For example, if $\alpha=0.8$, then a patient with actual adherence of only ${\p}=0.6$ can have effective adherence $p_{+}=0.8$, and a patient with actual adherence of ${\p}=0.8$ can have effective adherence $p_{+}=0.92$. Notice that $p_{+}=\p$ if $\alpha=\alpha_{\textup{c}}$. 

Summarizing, this analysis suggests that (i) the single dose protocol is best when $\thalf\ll {\tau}$ and (ii) the double dose protocol is best when $\thalf\gg {\tau}$. Conclusion (ii) contradicts some common dosing recommendations. Indeed, long drug half-lives are sometimes stated as the reason to avoid the double dose protocol in favor of the single dose protocol (for example, see recommendations for perampanel \cite{albassam2020} and lamotrigine sodium valproate \cite{gilbert2002}). However, we have shown that drugs with long half-lives are precisely the drugs for which patients could benefit from taking a double dose following a missed dose.

%%%%%%%%%%%%%%%%%%%%%%%%%%%%%%%%%%%%%%%%%%%%%%%%%%%%%%%%%%%%%%%%%%%%%%
\subsection{Double dose protocol is not toxic for drugs with long half-lives}

Taking a double dose is sometimes avoided due to concern that it may cause a toxic drug {\blue concentration} in the body. For a patient following the double dose protocol, {\blue\eqref{largeste}} provides an upper bound to how their drug {\blue concentration or exposure} {\blue($C(t)$ or }$\AUC^{\double}$) could compare to the perfectly adherent patient {\blue($C^{\perf}(t)$ or }$\AUC^{\perf}$). Indeed, {\blue\eqref{largeste}} ensures that
\begin{align}\label{bb}
\begin{split}
\textcolor{black}{\frac{C^{\double}(t)}{C^{\perf}(t)}}
&\textcolor{black}{<\lambda^{\double}
=\frac{2}{1+\alpha},}\quad\text{\textcolor{black}{for any }}\textcolor{black}{t\in[0,\tau)},\\
\frac{\AUC^{\double}}{{\AUC^{\perf}}}
&<\lambda^{\double}
=\frac{2}{1+\alpha}.
\end{split}
\end{align}
{\blue We emphasize that \eqref{bb} means that $\lambda^{\double}$ bounds both the relative drug concentration and the relative drug exposure.}

We plot the maximum possible ``overshoot'' $\lambda^{\double}-1$ in Figure~\ref{figp}b as a function of $\alpha$. Importantly, $\lambda^{\double}$ approaches 1 for large $\alpha$, which means that the possible overshoot from following the double dose protocol vanishes for drugs with long half-lives. In practical terms, \eqref{bb} means that if $\alpha=0.8$, then the drug {\blue concentration} is at most $11\%$ greater than the perfectly adherent patient, and if $\alpha=0.9$, then the drug {\blue concentration} is at most $5\%$ greater than the perfectly adherent patient. 

Furthermore, it is extremely rare for a patient to have a drug {\blue concentration} near the theoretical upper bound in \eqref{bb} if $\alpha$ is large. Indeed, the upper bound in \eqref{bb} is approached only by a patient that alternates exactly between taking and missing the scheduled doses for many dosing intervals if $\alpha$ is large. A more typical overshoot occurs in the following way. If the patient has been taking their medication as prescribed for a long time, then the concentration in their body time $t\in[0,\tau)$ after a dose is roughly the same as the perfectly adherent patient, which is $C^{\perf}(t)$. Then, if they miss one dose and take a double dose at the following dosing time, then the drug concentration time $t\in[0,\tau)$ after the double dose is (compared to $C^{\perf}(t)$) 
\begin{align}\label{typical}
\lambda^{\textup{typ}}
:=\frac{\alpha^{2}C^{\perf}(t)+2\alpha^{t/\tau}\frac{DF}{V}}{C^{\perf}(t)}
=1+(1-\alpha)^{2}
<\lambda^{\double},
\end{align}
which is shown in Figure~\ref{figp}b. 

Summarizing, if $\alpha$ is large, then a patient following the double dose protocol cannot have much more drug in their body than the perfectly adherent patient, where the precise upper bound is in \eqref{bb}. Furthermore, it is rare for the drug {\blue concentration} to approach the upper bound in \eqref{bb} if $\alpha$ is large, and the more typical overshoot is in \eqref{typical}.

%%%%%%%%%%%%%%%%%%%%%%%%%%%%%%%%%%%%%%%%%%%%%%%%%%%%%%%%%%%%%%%%%%%%%%
\subsection{Adherence thresholds should depend on drug half-life, dosing interval, and dosing protocol}

How much patient adherence is needed for the patient to obtain full treatment benefits? The adherence threshold
\begin{align}\label{standard}
p\ge0.8
\end{align}
has long been considered the definition of an ``adherent patient'' \cite{burnier2019, haynes1976}. However, our calculations show the inadequacy of defining an acceptable patient adherence rate solely in terms of the adherence $p$. To illustrate, suppose patient \#1 has adherence $p_{1}=0.85$ and $\alpha_{1}=0.2$, whereas patient \#2 has adherence $p_{2}=0.75$ and $\alpha_{2}=0.8$. Therefore, by the standard definition in \eqref{standard}, patient \#1 would be deemed ``adherent'' and patient \#2 would be deemed ``non-adherent.'' However, if both patients are following the single dose protocol, then their deviations from the perfectly adherent patient are
\begin{align*}
\dev_{1}^{\single}
&\approx0.33\quad\text{for patient \#1},\\
\text{and}\quad\dev_{2}^{\single}
&\approx0.29\quad\text{for patient \#2}.
\end{align*}
Hence, the supposed ``non-adherent patient'' (patient \#2) is actually closer to the perfectly adherent patient than the ``adherent patient'' (patient \#1).

In fact, the situation is more exasperated if we consider the double dose protocol. To illustrate, suppose patient \#1 and patient \#2 again have respective adherence rates of $p_{1}=0.85 $ and $p_{2}=0.75$, but now suppose they both have $\alpha=0.8$. If patient \#1 follows the single dose protocol and patient \#2 follows the double dose protocol, then their deviations from the perfectly adherent patient are
\begin{align*}
\dev_{1}^{\single}
&\approx0.19\quad\text{for patient \#1},\\
\text{and}\quad\dev_{2}^{\double}
&\approx0.14\quad\text{for patient \#2}.
\end{align*}
Therefore, the drug {\blue concentration}s in the ``non-adherent patient'' are again closer to the perfectly adherent patient than the ``adherent patient.''

%%%%%%%%%%%%%%%%%%%%%%%%%%%%%%%%%%%%%%%%%%%%%%%%%%%%%%%%%%%%%%%%%%%%%%
\subsection{More complicated dosing protocols}

We now consider more complicated dosing protocols. Notice that in the double dose protocol, the patient never takes more than two doses at a time, even if they missed two or more consecutive prior doses. A more aggressive protocol is the ``triple dose'' protocol in \eqref{ftriple5} in which the patient takes a double dose to make up for a single missed dose and a triple dose to make up for two or more consecutive missed doses. An even more aggressive protocol is the ``all dose'' protocol in \eqref{fall5} in which the patient takes all of their missed doses. As another example, consider the ``fractional'' dosing protocol,
\begin{align}
\begin{split}\label{fractional}
f_{n}^{\textup{frac}}
:=\begin{cases}
0 & \text{if }\xi_{n}=0,\\
1 & \text{if }\xi_{n}=1,\,\xi_{n-1}=1,\\
1+\alpha & \text{if }\xi_{n}=1,\,\xi_{n-1}=0, 
\end{cases}
\end{split}
\end{align}
in which the patient takes an extra large fractional dose if they missed one or more prior doses. The reasoning behind the size of this extra dose is that if the patient had taken their prior dose, then the fraction of that prior dose remaining in their body at the next dosing time would be $\alpha$. Note that \eqref{fractional} is a special case of the boost protocol in \eqref{boost} with $b=\alpha$.

These protocols may be impractical, as \eqref{ftriple5}-\eqref{fall5} require the patient to keep fairly detailed records and \eqref{fractional} requires the ability to take a fractional dose. Nevertheless, it is interesting to consider the implications of these dosing protocols. In the Appendix, we obtain exact analytical formulas for the deviation $\dev$ for these different dosing protocols (see {\blue Corollary~\ref{useful}}). In Figure~\ref{figall}, we plot $\dev$ for these protocols and for the single and double dose protocols as functions of $\alpha$ (we set $p=0.8$). There are two important points to observe from Figure~\ref{figall}.

%%%%%%%%%%%%%%%%%%%%%
\begin{figure}[t]%[htp]
\centering
\includegraphics[width=.6\linewidth]{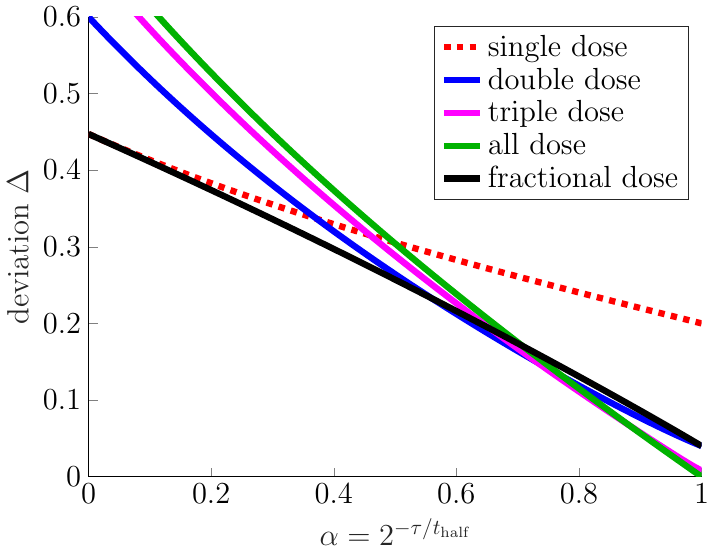}
\caption{Deviation $\dev$ from perfect adherence for various dosing protocols. }
\label{figall}
\end{figure}
%%%%%%%%%%%%%%%%%%%%%

First, if $\alpha$ is large, then the deviation $\dev$ is smallest for the triple dose and all dose protocols. Indeed, {\blue the formulas in Corollary~\ref{useful}} imply that
\begin{align*}
\lim_{\alpha\to1}\dev^{\textup{triple}}
&=(1-p)^{3},\quad
\lim_{\alpha\to1}\dev^{\textup{all}}
=0,
\end{align*}
where the superscript indicates the corresponding dosing protocol. Hence, one might recommend the triple dose protocol or even the all dose protocol if $\alpha$ is large. However, while $\dev^{\double}$ is much less than $\dev^{\single}$ for large $\alpha$, the further reductions in the deviation $\dev$ for the triple dose and all dose protocols are comparatively much smaller. Furthermore, compared to the double dose protocol, the triple dose and all dose protocols come with the costs of (i) being more complicated and (ii) allowing for a higher possible drug concentration in the body (see {\blue\eqref{largeste}}).

Next, notice in Figure~\ref{figall} that the fractional dose protocol in \eqref{fractional} results in a deviation $\dev^{\textup{frac}}$ that is near minimal for all values of $\alpha\in(0,1)$. This is perhaps not surprising, since the fractional dose protocol interpolates between the single dose and double dose protocols as $\alpha$ ranges from 0 to 1. Furthermore, it is noteworthy that a patient following the fractional dose protocol is assured to never have too much drug in their body. Indeed, {\blue\eqref{largeste}} implies that the drug exposure $\AUC^{\textup{frac}}$ for a patient following the fractional dose protocol is bounded above by the exposure for the perfectly adherent patient,
\begin{align*}
\AUC^{\textup{frac}}
\le{\AUC^{\perf}}.
\end{align*}
{\blue Similarly, the drug concentration for a patient following the fractional dose protocol is bounded above by the concentration in a perfectly adherent patient,
\begin{align*}
C^{\textup{frac}}(t)
\le C^{\perf}(t),\quad\text{for any }t\in[0,\tau).
\end{align*}
Therefore,} if a patient is able to take fractional doses, then the fractional dose protocol (i) yields a small deviation $\dev$ and (ii) ensures that the patient cannot have more drug in their body than the perfectly adherent patient (regardless of $\alpha$ and $p$).

Of course, the fractional dose protocol is similar to the single dose protocol if $\alpha$ is small, and it is similar to the double dose protocol if $\alpha$ is large. In particular, the fractional dose protocol differs significantly from both the single and double dose protocols only in the case that $\alpha\approx0.5$, which means $\thalf\approx {\tau}$. Therefore, this analysis suggests that (i) the single dose protocol is best when $\thalf\ll {\tau}$, (ii) the double dose protocol is best when $\thalf\gg {\tau}$, and (iii) the ``1.5 dose'' protocol is best when $\thalf\approx {\tau}$, where the 1.5 dose protocol means the patient takes an extra half dose to make up for a missed dose,
\begin{align}
\begin{split}\label{half}
f_{n}^{\textup{half}}
:=\begin{cases}
0 & \text{if }\xi_{n}=0,\\
1 & \text{if }\xi_{n}=1,\,\xi_{n-1}=1,\\
1.5 & \text{if }\xi_{n}=1,\,\xi_{n-1}=0.
\end{cases}
\end{split}
\end{align}
From a practical standpoint, the 1.5 dose protocol may often be feasible to implement (if a standard dose is two pills, then the patient takes three pills if they missed their prior dose). Since \eqref{half} is a special case of \eqref{boost} with $b=0.5$, {\blue\eqref{largeste}} implies that if a patient follows the 1.5 dose protocol, then their drug exposure, $\AUC^{1.5}$, is bounded above by
\begin{align*}
\AUC^{1.5}
\le\begin{cases}
{\AUC^{\perf}} &\text{if }\alpha\ge0.5,\\
\frac{1.5}{1+\alpha}{\AUC^{\perf}} &\text{if }\alpha<0.5.
\end{cases}
\end{align*}
Hence, a patient following the 1.5 dose protocol will never have much more drug in their body than the perfectly adherent patient if $\thalf\approx\tau$.

%%%%%%%%%%%%%%%%%%%%%%%%%%%%%%%%%%%%%%%%%%%%%%%%%%%%%%%%%%%%%%%%%%%%%%%%%%%%%%%%%%%%%%%%%%%%%%%%%%%%%%%%%%%%%%%%%%%%%%%%%%%%%%%%%%%%%%%%%%%%%%%%%%%%%%%%%%%%%%%%%%%%%%%%%%%%%%%%%%%%%%%%%%%%%%%%%%%%%%%%%%%%%%%%%%
\subsection{Intuition}\label{intuition}

We have found that the single dose protocol is best when $\thalf\ll {\tau}$ and the double dose protocol is best when $\thalf\gg {\tau}$. These results relied on rather technical mathematical analysis. The purpose of this section is to provide an intuitive explanation for these results.

%%%%%%%%%%%%%%%%%%%%%%%%%%%%%%%%%%%%%%%%%%%%%%%%%%%%%%%%%%%%%%%%%%%%%%%%%%
\subsubsection{Stochastic simulations}

%%%%%%%%%%%%%%%%%%%%%
\begin{figure}[t]%[htp]
\centering
\includegraphics[width=\linewidth]{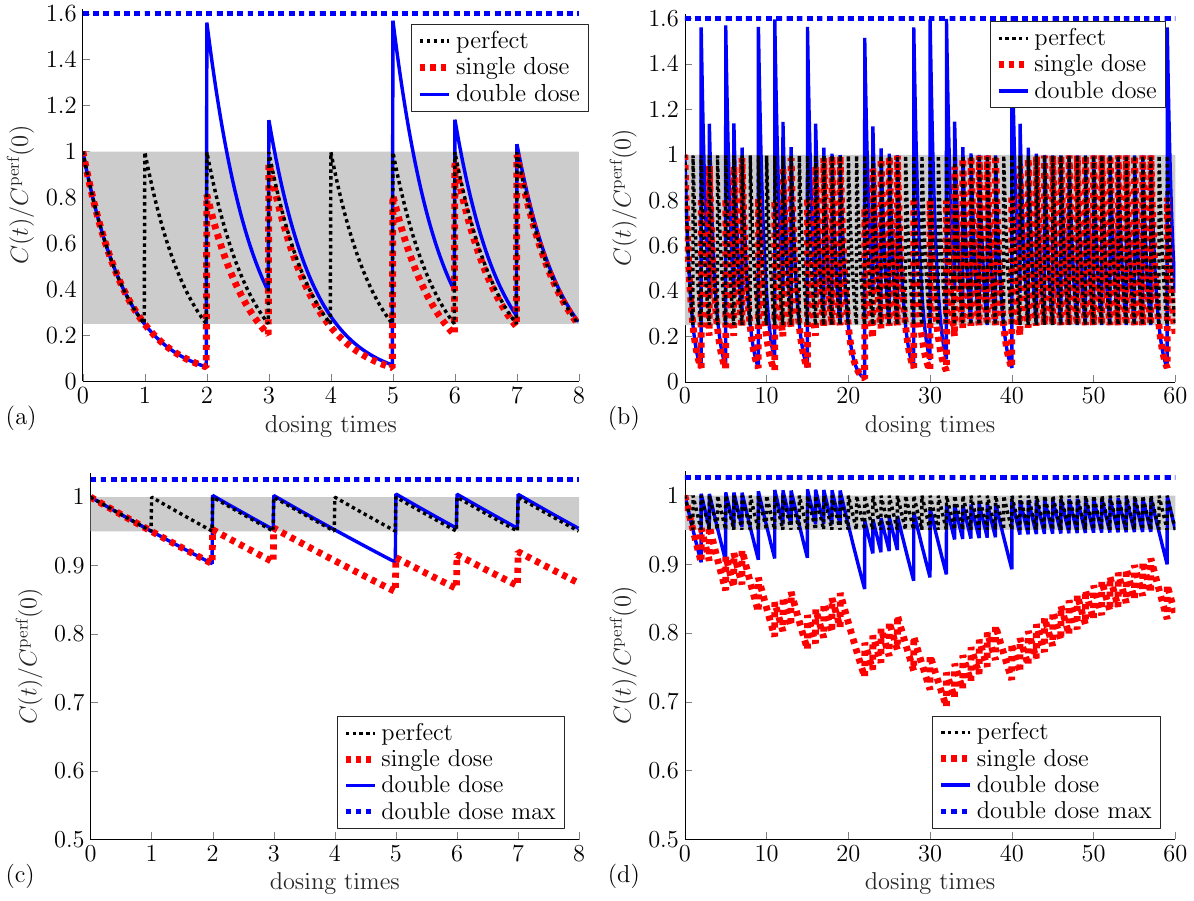}
\caption{Stochastic simulations of drug concentration time courses. In panels (a) and (b), we set $\alpha=0.25$ (meaning $\thalf\ll\tau$), with panel (b) plotted for a long time period. Panels (c) and (d) are the same as (a) and (b), except $\alpha=0.95$ (meaning $\thalf\gg\tau$). The adherence is $p=0.8$ in all panels.}
\label{figsim}
\end{figure}
%%%%%%%%%%%%%%%%%%%%%

We begin by plotting stochastic simulations of the drug concentration in the body as a function of time. In Figure~\ref{figsim}a, we set $p=0.8$ and $\alpha=0.25$ (meaning $\thalf\ll\tau$) and plot the concentration under perfect adherence (black dotted curve), and for imperfect adherence for the single dose protocol (red dashed curve) and double dose protocol (blue solid curve). The shaded gray highlights the region between the peaks and troughs for perfect adherence. While this is just one particular realization of the missed doses (the patient happens to miss doses at the first and fourth dosing times), it nevertheless illustrates that the curve for the patient with perfect adherence is better approximated by the single dose protocol than the double dose protocol. Indeed, the single dose and double dose protocols both undershoot the perfect adherence case when a dose is missed, but the double dose protocol then overcompensates when the patient takes their next dose. This is further illustrated in Figure~\ref{figsim}b, which plots the same scenario but for a longer time period. 

%%%%%%%%%%%%%%%%%%%%%
\begin{figure}[t]%[htp]
\centering
\includegraphics[width=\linewidth]{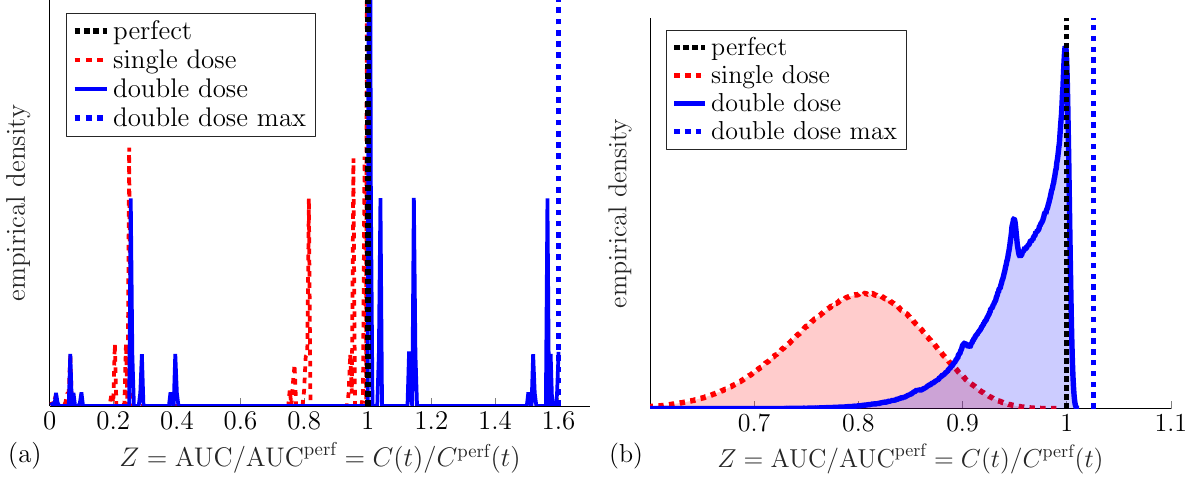}
\caption{The distribution of the drug exposure for the single dose protocol (red) and the double dose protocol (blue) obtained from stochastic simulations. We take $p=0.8$ in both plots and $\alpha=0.25$ in (a) and $\alpha=0.95$ in (b). In both plots, the black dotted vertical line at $\AUC/{\AUC^{\perf}}\textcolor{black}{=C(t)/C^{\perf}(t)}=1$ describes the perfectly adherent patient, and the blue dashed vertical line describes the largest possible drug {\blue concentration} for the double dose patient (see {\blue\eqref{largeste}}).
}
\label{figsim1}
\end{figure}
%%%%%%%%%%%%%%%%%%%%%

In Figures~\ref{figsim}c and \ref{figsim}d, we plot the same curves as in Figures~\ref{figsim}a and \ref{figsim}b except in the case that $\alpha=0.95$ (meaning $\thalf\gg\tau$). In this case, the double dose protocol approximates perfect adherence much better than the single dose protocol. While the double dose protocol curve does rise above the perfect adherence curve, it is only by a few percent. In contrast, the single dose protocol curve dips far below the double dose and perfect adherence curves when doses are missed.

In Figure~\ref{figsim1}, we plot the distribution of the drug concentration for the two scenarios in Figure~\ref{figsim}, with $\alpha=0.25$ in Figure~\ref{figsim1}a and $\alpha=0.95$ in Figure~\ref{figsim1}b. The distributions are computed from $10^{7}$ realizations of {\blue$c(N\tau+t)$ with $N=100$}. {\blue The very irregular distributions in Figure~\ref{figsim1}a are typical for small values of $\alpha$. The reason for this irregularity is difficult to intuit and is in fact a rich mathematical topic. Indeed, these irregular distributions for the single dose protocol have been studied in the pure mathematics literature for many decades under the name infinite Bernoulli convolutions \cite{jessen1935, kershner1935, erdos1939, peres2000, solomyak1995, peres1998, escribano2003, hu2008}.}

It is again evident from Figure~\ref{figsim1} that the single dose protocol best approximates the perfectly adherent patient when $\alpha$ is small (i.e.\ short drug half-life), whereas the double dose protocol best approximates the perfectly adherent patient when $\alpha$ is large (i.e.\ long drug half-life). Notice also from Figure~\ref{figsim1}b that it is very rare for the double dose protocol to ever result in a drug {\blue concentration} much larger than the perfectly adherent patient.

%%%%%%%%%%%%%%%%%%%%%%%%%%%%%%%%%%%%%%%%%%%%%%%%%%%%%%%%%%%%%%%%%%%%%%%%%%
\subsubsection{A simple calculation}

The phenomena seen above can be explained with a simple calculation. Suppose the patient has been taking the drug as prescribed for a long time and so the drug concentration time $t\in[0,\tau)$ after a dose is $C^{\perf}(t)$. Suppose the patient then misses one dose and remembers to take the drug at the following dosing time. Under the single dose protocol, the concentration time $t\in[0,\tau)$ after the single dose is
\begin{align*}
\rho^{\single}(t)
:=\alpha^{2}C^{\perf}(t)+\alpha^{t/\tau}\tfrac{DF}{V}
=(1-\alpha(1-\alpha))C^{\perf}(t),
\end{align*}
where we have used that $C^{\perf}(t)=\alpha^{t/\tau}\tfrac{DF}{V} A^{\perf}$ and $A^{\perf}=1/(1-\alpha)$. Alternatively, under the double dose protocol, the concentration time $t\in[0,\tau)$ after the double dose is
\begin{align*}
\rho^{\double}(t)
:=\alpha^{2}C^{\perf}(t)+2\alpha^{t/\tau}\tfrac{DF}{V}
=(1+(1-\alpha)^{2})C^{\perf}(t).
\end{align*}
For small $\alpha$, we have that
\begin{align*}
\rho^{\single}(t)
&\approx(1-\alpha){C^{\perf}(t)},\quad
\rho^{\double}(t)
\approx(2-2\alpha){C^{\perf}(t)},\quad\text{if $\alpha$ is near 0},
\end{align*}
which means the single dose protocol puts the patient slightly below the desired $C^{\perf}(t)$, but the double dose protocol puts the patient at almost twice $C^{\perf}(t)$. However, for large $\alpha$, we have that
\begin{align*}
\rho^{\single}(t)
&\approx(1-(1-\alpha)){C^{\perf}(t)},\\
\rho^{\double}(t)
&=(1+(1-\alpha)^{2}){C^{\perf}(t)},
\quad\text{if $\alpha$ is near 1},
\end{align*}
which means that while the single dose protocol puts the patient below $C^{\perf}(t)$, the double dose protocol puts the patient above $C^{\perf}(t)$ by a much smaller amount. In practical terms, if $\alpha=0.9$, then the single dose patient undershoots $C^{\perf}(t)$ by about $10\%$, whereas the double dose patient overshoots $C^{\perf}(t)$ by a mere $1\%$.

%%%%%%%%%%%%%%%%%%%%%%%%%%%%%%%%%%%%%%%%%%%%%%%%%%%%%%%%%%%%%%%%%%%%%%%%%%%%%%%%%%%%%%%%%%%%%%%%%%%%%%%%%%%%%%%%%%%%%%%%%%%%%%%%%%%%%%%%%%%%
\section{Discussion}

We have formulated and analyzed a mathematical model to investigate how nonadherence to medication affects drug {\blue concentration}s in the body. We computed pharmacologically relevant statistics of the drug {\blue concentration} in the body, thus providing quantitative descriptions of the effects of nonadherence, and how these effects depend on the adherence percentage $p$, drug half-life $\thalf$, dosing interval ${\tau}$, and how missed doses are handled (i.e.\ the dosing protocol). In agreement with previous results \cite{osterberg2005}, we found that drug {\blue concentration}s are less affected by missed doses if the half-life is long compared to the dosing interval, and we quantified this effect. As a general principle, we found that nonadherence is best mitigated by taking double doses following missed doses if the drug half-life is long compared to the dosing interval (i.e.\ $\thalf\gg {\tau}$). Furthermore, in this scenario we found that taking double doses following missed doses cannot cause the drug {\blue concentration} to rise much above the desired {\blue concentration}. Although long drug half-lives are sometimes stated as the reason to avoid a double dose after a missed dose, we have shown that drugs with long half-lives are precisely the drugs for which patients could benefit from taking a double dose after a missed dose.

As an application of these results, consider the synthetic form of thyroxine known as levothyroxine \cite{brent2017}. Levothyroxine is the standard treatment for hypothyroidism, which is one of the most common diseases in the world and affects up to 5\% of the global population \cite{duntas2019}. Levothyroxine pills are used to replace missing thyroid hormone in hypothyroid patients and are usually taken once daily for the remainder of the patient's life \cite{duntas2019}. Hence, the dosing interval is ${\tau}=1\;\text{day}$. The half-life of levothyroxine for hypothyroid patients is between 9 and 10 days \cite{lexicomp, dynamed}, and therefore setting $\thalf=9\;\text{days}$ yields 
\begin{align*}
\alpha=2^{-{\tau}/\thalf}\approx0.93.
\end{align*}
Since this $\alpha$ value is close to 1, our results imply that a hypothyroid patient taking levothyroxine with imperfect adherence can make the drug {\blue concentration} in their body much closer to the {\blue concentration} in a perfectly adherent patient by following the double dose protocol rather than the single dose protocol. That is, if the patient {\blue misses a dose, then it is better to take a double dose at the next dosing time than to skip the missed dose}. These results conflict with common recommendations for levothyroxine, which advise patients to skip any dose that is delayed by more than 12 hours \cite{mayoclinic, rxlist, drugs, nhs}. However, some sources recommend a double dose of levothyroxine after a missed dose (see Chapter 376 of \cite{jameson2018}), and indeed taking a double dose is recognized as safe (see Chapter 36 in \cite{miller2018}). {\blue In fact, the American Thyroid Association has proposed taking up to 7 doses of levothyroxine at once \cite{jonklaas2014}.}

Furthermore, although following the double dose protocol may cause the drug {\blue concentration} in the patient to rise above the {\blue concentration} in a perfectly adherent patient, the maximum possible overshoot for levothyroxine is less than $4\%$ since
\begin{align*}
\lambda^{\double}
=\frac{2}{1+\alpha}
\approx\frac{2}{1+0.93}<1.04.
\end{align*}
In addition, it would be very rare for a patient to have drug {\blue concentration}s near this maximum, as this maximum corresponds to a patient missing doses every other day for many days. Indeed, the typical overshoot is {\blue less than 1\%} for this $\alpha$ value (see \eqref{typical}).

Our model assumes that the drug absorption rate $\ka$ is much faster than the drug elimination rate $\ke$. This is true for most drugs administered orally in conventional dosage forms \cite{gibaldi1982, peletier2017, ma2018, fillastre1987, leroy1990, strandgaarden1999}, including levothyroxine. Indeed, for hypothyroid patients taking levothyroxine, the time to maximum concentration, $t_{\textup{max}}$, is only 3 hours \cite{benvenga1995}, whereas the elimination half-life is $\thalf=9\;\text{days}$ \cite{lexicomp, dynamed}. Using \eqref{alphahalf} and the relation \cite{gibaldi1982},
\begin{align*}
t_{\textup{max}}
=\frac{\ln(\ka/\ke)}{\ka-\ke},
\end{align*}
implies that $\ke/\ka<0.0015$ for levothyroxine.

Several important prior works have used mathematical modeling to investigate the effects of medication nonadherence. Li and Nekka developed stochastic models of the effect of medication nonadherence on patient drug {\blue concentration}s \cite{li2007, li2009}. The models in \cite{li2007, li2009} allow the drug to be administered at irregular times, and the authors obtained analytical formulas for drug {\blue concentration} statistics. In a series of papers \cite{levy2013, fermin2017}, another group of authors developed a variety of stochastic pharmacokinetic models, including ones that allow for variation in dosing times, dose amounts, and elimination rates. The discrete time model proposed in \cite{levy2013} is essentially identical to the model in the present paper in the special case of the single dose protocol. These prior works did not analyze different protocols for handling missed doses. Ma \cite{ma2017} analyzed the mean first passage time for the patient's drug {\blue concentration} to reach a therapeutic range for various ways of handling a missed dose assuming that the patient never misses two or more consecutive doses. Numerical simulations of computational models have also been useful for understanding the effects of nonadherence for specific drugs \cite{gu2020}, especially for antiepileptic drugs \cite{garnett2003, reed2004, dutta2006, ding2012, chen2013, gidal2014, brittain2015, sunkaraneni2018} and antipsychotic drugs \cite{hard2018, elkomy2020}.

Naturally, our model neglects various pharmacological details. We have developed a simple model aimed at addressing patients remembering or forgetting to take their medication{\blue, and we assumed that the patient forgets their medication at each dosing time with a fixed probability, independent of their prior behavior}. However, nonadherence is a dynamic process and patients exhibit a variety of patterns of nonadherence \cite{burnier2019}, including extended ``drug holidays'' \cite{vrijens2008} and ``white-coat adherence'' \cite{burnier2013}. {\blue We also assumed that the patient takes a double dose only if they missed their prior dose. However, actual patients might cause harm by mistakenly taking a double dose when they did not miss their prior dose.} Furthermore, {\blue our model did not allow delayed doses, and a more detailed model would} allow patients to take medication at times that vary continuously. Another source of stochasticity is that pharmacokinetic parameters vary between patients, which has been modeled by analyzing a population of patients with a distribution of parameters \cite{levy2013, fermin2017}.

{\blue Another limitation of our analysis is that we considered only a single compartment pharmacokinetic model with linear elimination and immediate absorption. The pharmacokinetics of some drugs are better described by multicompartment models \cite{gibaldi1982}, and while most drugs can be adequately described by linear processes, there are drugs which exhibit nonlinear kinetics (see Chapter 7 in \cite{gibaldi1982}). In addition, our assumption of fast absorption does not hold for so-called extended release or sustained release drugs \cite{gidal2021, wheless2018, vadivelu2011}. Our model also did not address pharmacodynamics, and an interesting avenue for further research would be to couple the stochastic pharmacokinetic model in this work to a pharmacodynamic model.} 

To conclude, medication nonadherence is a complex and multi-faceted problem, and steps toward its alleviation require contributions from a variety of disciplines. Mathematical modeling is a valuable tool in this endeavor, especially given the ethics of clinical trials that require sporadic dosing. Further, mathematical models can disentangle the effects of various factors and quickly investigate the efficacy of possible interventions. Moving forward, we anticipate that mathematical modeling and analysis will play an important role in understanding and alleviating the effects of medication nonadherence.

%%%%%%%%%%%%%%%%%%%%%%%%%%%%%%%%%%%%%%%%%%%%%%%%%%%%%%%%%%%%%%%%%%%%%%%%%%%%%%%%%%%%%%%%%%%%%%%%%%%%%%%%%%%%%%%%%%%%%%%%%%%%%%%%%%%%%%%%%%%%
\appendix
\section{Appendix}

{\blue
In this appendix, we analyze the mathematical model formulated in the main text.

%%%%%%%%%%%%%%%%%%%%%%%%%%%%%%%%%%%%%%%%%%%%%%%%%%%%%%%%%%%%%%%%%%%%%%%%%%%%%%%%%%%%%%%%%%%%%%%%%%%%%%%%%%%%%%%%%%%%%%%%%%%%%%%%%%%%%%%%%%%%
%\subsection{Mathematical analysis}\label{math}

\subsection{General theory}

Let $\{\xi_{n}\}_{n\in\Z}$ be a bi-infinite sequence of iid Bernoulli random variables as in \eqref{xin} (it is convenient to allow the index $n$ to vary over positive and negative integers). The dose taken at dosing time $n$ may depend on the patient's behavior at time $n$ and the prior $m$ dosing times for some given memory parameter $m\ge0$. Toward this end, let $\{X_{n}\}_{n\in\Z}$ be the history process,
\begin{align}\label{hist0}
X_{n}=(\xi_{n-m},\xi_{n-m+1},\dots,\xi_{n-1},\xi_{n})\in\{0,1\}^{m+1},
\end{align}
which records whether or not the patient remembered at dosing time $n$ and the prior $m$ dosing times. It is immediate that $\{X_{n}\}_{n\in\Z}$ is an irreducible discrete-time Markov chain on the state space $\{0,1\}^{m+1}$ \cite{norris1998}. In particular, let
\begin{align}\label{P}
P
=\{P(x,y)\}_{x,y\in\{0,1\}^{m+1}}
\in\R^{2^{m+1}\times2^{m+1}}
\end{align}
denote the transition probability matrix of the Markov chain $\{X_{n}\}_{n\in\Z}$ with entries defined by
\begin{align*}
P(x,y)
=\P(X_{1}=y\,|\,X_{0}=x),\quad x,y\in\{0,1\}^{m+1},
\end{align*}
where $x\in\{0,1\}^{m+1}$ denotes the vector,
\begin{align*}
x
=(x_{-m},x_{m+1},\dots,x_{-1},x_{0})\in\{0,1\}^{m+1},
\end{align*}
and $y\in\{0,1\}^{m+1}$ is denoted analogously. The definition of $\{\xi_{n}\}_{n\in\Z}$ then implies that the entries of $P$ are
\begin{align*}
P(x,y)
=\begin{cases}
p & \text{if }y_{0}=1,\,(x_{-m+1},\dots,x_{0})=(y_{-m},\dots,y_{-1}),\\
1-p & \text{if }y_{0}=0,\,(x_{-m+1},\dots,x_{0})=(y_{-m},\dots,y_{-1}),\\
0 & \text{otherwise}.
\end{cases}
\end{align*}
Furthermore, the definition of $\{\xi_{n}\}_{n\in\Z}$ implies that the distribution of $X_{n}$ is
\begin{align}\label{pi}
\pi(x)
:=\P(X_{n}=x)
=p^{s(x)}(1-p)^{m+1-s(x)}>0,\quad n\in\Z,\,x\in\{0,1\}^{m+1},
\end{align}
where {\blue$s(x):=\sum_{k=0}^{m}x_{k}\in\{0,1,\dots,m+1\}$} is the number of $1$'s in $x$.

A dosing protocol $f_{n}=f(X_{n})$ is any function
\begin{align}\label{f}
f:\{0,1\}^{m+1}\mapsto[0,\infty).
\end{align}
While we are most interested in the single dose and double dose protocols in \eqref{fsingle} and \eqref{fdouble}, we also investigate a few other protocols. First, consider the ``boost'' dosing protocol,
\begin{align}
\begin{split}\label{boost}
f_{n}^{\textup{boost}}
:=\begin{cases}
0 & \text{if }\xi_{n}=0,\\
1 & \text{if }\xi_{n}=1,\,\xi_{n-1}=1,\\
1+b & \text{if }\xi_{n}=1,\,\xi_{n-1}=0, 
\end{cases}
\end{split}
\end{align}
in which the patient takes a standard single dose of size $D$ plus a ``boost'' dose of size $bD$ if they missed the prior dose, for some $b\ge0$. Notice that the boost protocol reduces to the single dose protocol if $b=0$ and the double dose protocol if $b=1$. Another protocol is the ``triple dose'' protocol,
\begin{align}
\begin{split}\label{ftriple5}
f_{n}^{\textup{triple}}
:=\begin{cases}
0 & \text{if }\xi_{n}=0,\\
1 & \text{if }\xi_{n}=1,\,\xi_{n-1}=1,\\
2 & \text{if }\xi_{n}=1,\,\xi_{n-1}=0,\,\xi_{n-2}=1,\\ 
3 & \text{if }\xi_{n}=1,\,\xi_{n-1}=0,\,\xi_{n-2}=0, 
\end{cases}
\end{split}
\end{align}
in which the patient takes a double dose to make up for a single missed dose and a triple dose to make up for two or more consecutive missed doses. Finally, consider the ``all dose'' protocol in which the patient takes all of their missed doses,
\begin{align}
\begin{split}\label{fall5}
f_{n}^{\textup{all}}
:=\begin{cases}
0 & \text{if }\xi_{n}=0,\\
k+1 & \text{if }\xi_{n}=1,\,\xi_{n-1}=0,\dots,\xi_{n-k}=0,\,\xi_{n-k-1}=1.
\end{cases}
\end{split}
\end{align}
The all dose protocol does not fit into the framework of \eqref{hist0}, and thus an alternative analysis is developed in the section below.

For a dosing protocol $f$, a real number $a\ge0$, integers $M\le N$, and time $t\in[0,\tau)$, define the random variable
\begin{align}\label{CMN}
C_{M,N}(a,t)
:=\alpha^{t/\tau}\frac{DF}{V}\Big(\alpha^{N-M+1}a+\sum_{n=M}^{N}\alpha^{N-n}f(X_{n})\Big),
\end{align}
which is the drug concentration if time $t$ has elapsed since dosing time $n=N$, where $a\ge0$ describes the concentration at dosing time $n=M-1$. We are interested in the drug concentration after a long time, which corresponds to taking $N\to\infty$ in \eqref{CMN}. We will see that this limiting distribution is independent of $a$ and $M$.

Since $\{X_{n}\}_{n\in\Z}$ is a stationary sequence, we have that
\begin{align}\label{shift}
C_{M,N}(a,t)
=_{\dd}C_{-(N-M),0}(a,t),\quad\text{for integers }N\ge M,
\end{align}
where $=_{\dd}$ denotes equality in distribution. Define
\begin{align}\label{Y0}
\begin{split}
{C}(t)
:=\lim_{N\to\infty}C_{-(N-M),0}(a,t)
=\alpha^{t/\tau}\frac{DF}{V}A,\quad\text{for }t\in[0,\tau),
\end{split}
\end{align}
where
\begin{align}\label{AB0}
A
:=\sum_{n=0}^{\infty}\alpha^{n}f(X_{-n}).
\end{align}
The function $f$ must be bounded since the state space $\{0,1\}^{m+1}$ is finite, and thus the Weierstrass M-test ensures that $C(t)$ exists almost surely, and it is immediate that $C(t)$ does not depend on $M\in\Z$ or $a\ge0$. Random variables of the form in \eqref{Y0}-\eqref{AB0} are sometimes called random pullback attractors because they take an initial condition (in this case, $a\ge0$) and pull it back to the infinite past \cite{Crauel01, Mattingly99, Schmalfuss96, lawley15sima, lawley2019hhg}.

Therefore, \eqref{shift} and \eqref{Y0} imply that for any $M\in\Z$ and $a\ge0$, the random variable $C_{M,N}(a,t)$ converges in distribution to $C(t)$ as $N\to\infty$ \cite{billingsley2013}, which we denote by
\begin{align}\label{cd}
C_{M,N}(a,t)\to_{\dd}{C}(t),\quad\text{as }N\to\infty.
\end{align}
Since $f$ is bounded, $C_{M,N}(a,t)$ can be bounded by a nonrandom constant independent of $N$, and thus \eqref{shift}, \eqref{Y0}, and the Lebesgue dominated convergence theorem ensure the convergence of every moment of $C_{M,N}(a,t)$, 
\begin{align}\label{momentconvergence}
\E[(C_{M,N}(a,t))^{j}]
\to\E[({C(t)})^{j}],\quad\text{as }N\to\infty\text{ for all }j>0.
\end{align}
Summarizing, the large $N$ distribution and statistics of $C_{M,N}(a,t)$ are independent of $a\ge0$ and $M\in\Z$, and we can study them by studying the distribution and statistics of $C(t)$.

Furthermore, it is immediate from the definitions in \eqref{perfper} and \eqref{Y0} that
\begin{align}\label{ZZ}
Z
:=\frac{\AUC}{\AUC^{\perf}}
=\frac{C(t)}{C^{\perf}(t)}
=\frac{A}{A^{\perf}},\quad\text{for all }t\in[0,\tau).
\end{align}
Therefore, studying how drug {\blue concentration}s are affected by imperfect adherence amounts to studying $Z$. Since $A^{\perf}=1/(1-\alpha)$, note that
\begin{align*}
\E[Z]=(1-\alpha)\E[A],\quad
\E[Z^{2}]=(1-\alpha)^{2}\E[A^{2}].
\end{align*}

For the single dose protocol in \eqref{fsingle}, the analysis of $Z$ is straightforward since elements of the sequence $\{f(X_{n})\}_{n\in\Z}$ are independent in this special case. The following theorem computes statistics of $Z$ for a general dosing protocol. 

\begin{theorem}\label{thm12}
The first and second moments of $Z$ are
\begin{align}
\E[Z]
&=\sum_{x}f(x)\pi(x),\label{mean}\\
\E[Z^{2}]
&=\frac{1-\alpha}{1+\alpha}\Big(\sum_{x}f(x)2{{u}}(x)-\sum_{x}(f(x))^{2}\pi(x)\Big),\label{secondmoment}
\end{align}
where $\sum_{x}$ denotes the sum over all $x\in\{0,1\}^{m+1}$, $\pi$ is in \eqref{pi}, and
\begin{align}\label{alg2}
{{u}}
:=(I-\alpha P^{\top})^{-1}v,
\end{align}
where $I\in\R^{2^{m+1}\times2^{m+1}}$ is the identity matrix, $P^{\top}$ is the transpose of $P$ in \eqref{P}, and $v\in\R^{2^{m+1}}$ is the vector with entries $v(x)=f(x)\pi(x)$ for $x\in\{0,1\}^{m+1}$.
\end{theorem}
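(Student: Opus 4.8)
The plan is to compute $\E[Z]$ and $\E[Z^2]$ directly from the series representation $A=\sum_{n=0}^{\infty}\alpha^n f(X_{-n})$ in \eqref{AB0}, using stationarity of the Markov chain $\{X_n\}_{n\in\Z}$ and the relation $Z = (1-\alpha)A$. The first moment is immediate: by linearity and monotone (or dominated) convergence, $\E[A]=\sum_{n\ge0}\alpha^n\,\E[f(X_{-n})]=\E[f(X_0)]\sum_{n\ge0}\alpha^n = \frac{1}{1-\alpha}\sum_x f(x)\pi(x)$, since each $X_{-n}$ has the stationary law $\pi$ from \eqref{pi}. Multiplying by $1-\alpha$ gives \eqref{mean}.

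For the second moment I would expand
\begin{align*}
\E[A^2]
=\sum_{n\ge0}\sum_{k\ge0}\alpha^{n+k}\,\E[f(X_{-n})f(X_{-k})]
=\E[f(X_0)^2]\sum_{n\ge0}\alpha^{2n}
+2\sum_{0\le n<k}\alpha^{n+k}\,\E[f(X_{-n})f(X_{-k})].
\end{align*}
By stationarity, for $n<k$ the pair $(X_{-n},X_{-k})$ has the same law as $(X_{0},X_{-(k-n)})$, so writing $j=k-n\ge1$ and summing the geometric factor $\sum_{n\ge0}\alpha^{2n+j}=\alpha^j/(1-\alpha^2)$, the cross term becomes $\frac{2}{1-\alpha^2}\sum_{j\ge1}\alpha^j\,\E[f(X_0)f(X_{-j})]$. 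Now $\E[f(X_0)f(X_{-j})]=\sum_{x,y}f(x)f(y)\,\pi(y)P^{j}(y,x)$, where $P^j$ is the $j$-step transition matrix. Interchanging the sum over $j$ with the sums over states (justified by boundedness of $f$ and $\alpha<1$, so everything converges absolutely) gives $\sum_{x}f(x)\sum_{y}f(y)\pi(y)\big(\sum_{j\ge1}\alpha^j P^j\big)(y,x)$, and $\sum_{j\ge1}\alpha^j P^j = (I-\alpha P)^{-1}-I$ — the resolvent identity, valid since $\|\alpha P\|<1$ in an appropriate norm. Recognizing $v(y)=f(y)\pi(y)$, the inner double sum is exactly $\big((I-\alpha P)^{-1}v\big)$ evaluated appropriately; transposing to match the statement's $u=(I-\alpha P^{\top})^{-1}v$, one gets $\sum_x f(x)\big[u(x)-v(x)\big]=\sum_x f(x)u(x)-\sum_x f(x)^2\pi(x)$. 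Assembling the pieces: $\E[A^2]=\frac{1}{1-\alpha^2}\big(\sum_x f(x)^2\pi(x) + 2\sum_x f(x)(u(x)-v(x))\big)=\frac{1}{1-\alpha^2}\big(2\sum_x f(x)u(x)-\sum_x f(x)^2\pi(x)\big)$. Multiplying by $(1-\alpha)^2$ and using $(1-\alpha)^2/(1-\alpha^2)=(1-\alpha)/(1+\alpha)$ yields \eqref{secondmoment}.

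The main obstacle is purely bookkeeping: correctly tracking which index is "earlier" in the chain, lining up the one-sided geometric sums, and handling the transpose so that the resolvent acts on $v$ from the correct side (since $\pi$ is the stationary distribution, $\pi P = \pi$, and the expression $\sum_y \pi(y)P^j(y,x)$ naturally produces $P^{\top}$ when written as a matrix-vector product). I would also note at the outset that all interchanges of summation and all invocations of $(I-\alpha P)^{-1}=\sum_{j\ge0}\alpha^j P^j$ are justified because $P$ is a stochastic matrix (row sums $1$, nonnegative entries) and $0<\alpha<1$, so the Neumann series converges absolutely in operator norm; together with boundedness of $f$ on the finite state space, Fubini/Tonelli applies throughout.
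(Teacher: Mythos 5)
Your proof is correct, but it takes a genuinely different route from the paper's. The paper does not expand the double series for $\E[A^2]$; instead it exploits the distributional invariance relation $A =_{\dd} \alpha A + f(X_{1})$ (the pullback-attractor fixed point), squares it to get $\E[A^{2}]=\frac{1}{1-\alpha^{2}}\bigl(2\alpha\E[Af(X_{1})]+\E[(f(X_{1}))^{2}]\bigr)$, and then handles the correlation term by deriving a closed linear system for the localized expectations ${u}(x)=\E[A 1_{X_{0}=x}]$ via the tower property, whose solution is $(I-\alpha P^{\top})^{-1}v$. You instead expand $\E[A^2]=\sum_{n,k}\alpha^{n+k}\E[f(X_{-n})f(X_{-k})]$, reduce the off-diagonal terms by stationarity to the lagged correlations $\E[f(X_{0})f(X_{-j})]=\sum_{x,y}f(x)f(y)\pi(y)P^{j}(y,x)$, and sum the Neumann series $\sum_{j\ge1}\alpha^{j}P^{j}=(I-\alpha P)^{-1}-I$; after transposing and using $\sum_{x}f(x)v(x)=\sum_{x}(f(x))^{2}\pi(x)$ this lands exactly on \eqref{secondmoment}, and your index bookkeeping (which of $X_{-n},X_{-k}$ is earlier, and why the matrix acts as $P^{\top}$ on $v$) is handled correctly. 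Your argument is more elementary and makes the appearance of the resolvent $(I-\alpha P^{\top})^{-1}$ transparent as a generating function of $j$-step correlations; the paper's invariance-relation argument is slicker once set up and carries over more directly to the infinite-state-space history process of Section~\ref{sectionalt}, where the same linear equations are solved in closed form. Your justifications for interchanging sums (bounded $f$ on a finite state space, $\alpha\in(0,1)$, convergent Neumann series) are all sound.
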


\begin{remark}\label{rmkmath}
The random variable $C(t)$ in \eqref{Y0} generalizes an infinite Bernoulli convolution \cite{peres2000}. If we let $C^{\single}(t)$ denote $C(t)$ in the case that $f$ is the single dose protocol in \eqref{fsingle}, then an infinite Bernoulli convolution is merely a shift and rescaling of $C^{\single}(t)$,
\begin{align*}
\Theta
=\sum_{n=0}^{\infty}\alpha^{n}(2\xi_{n}-1)
=\frac{2C^{\single}(t)-C^{\perf}(t)}{\alpha^{t/\tau}\frac{DF}{V}}.
\end{align*}
Dating back to Erd\H{o}s and others in the 1930s \cite{jessen1935, kershner1935, erdos1939} and continuing in more recent years \cite{peres2000, solomyak1995, peres1998, escribano2003, hu2008}, mathematicians have studied the distribution of $\Theta$. Though the definition of $\Theta$ is quite simple, its distribution is often quite irregular and depends very delicately on the parameters $\alpha$ and $p$.
\end{remark}

%%%%%%%%%%%%%%%%%%%%%%%%%%%%%%%%%%%%%%%%%%%%%%%%%%%%%%%%%%%%%%%%%%%%%%%%%%%%%%%%%%
\begin{proof}[Proof of Theorem~\ref{thm12}]
Define
\begin{align*}
A_{1}
:=\sum_{n=0}^{\infty}\alpha^{n}f(X_{-n+1}).
\end{align*}
The definition of $A$ in \eqref{AB0} and the stationarity of $\{X_{n}\}_{n\in\Z}$ imply that
\begin{align}\label{inv}
{\A}
=_{\dd}A_{1}
=\alpha {\A}+f(X_{1}),
\end{align}
where $=_{\dd}$ denotes equality in distribution. The invariance relation in \eqref{inv} plays a key role in our analysis.

Taking the expectation of \eqref{inv} and rearranging implies that 
\begin{align}\label{mean0}
\E[{A}]
=\frac{\E[f(X_{1})]}{1-\alpha},
\end{align}
where we have used that ${\A}=_{\dd}A_{1}$. We note that \eqref{mean0} can also be obtained by taking the expectation of \eqref{AB0}. Combining \eqref{pi}, \eqref{ZZ}, and \eqref{mean0} gives \eqref{mean} in Theorem~\ref{thm12}.

%%%%%%%%%%%%%%%%%%%%%%%%%%%%%%%%%%%%%%%%%%%%%%%%%%%%%%%%%%%%%%%%%%%%

Squaring \eqref{inv}, taking expectation, and rearranging implies that
\begin{align}\label{tree0}
\E[A^{2}]
=\frac{1}{1-\alpha^{2}}\Big(2\alpha\E\big[{\A}f(X_{1})\big]+\E\big[(f(X_{1}))^{2}\big]\Big),
\end{align}
where we have again used ${\A}=_{\dd}A_{1}$. By definition of expectation, we have that
\begin{align}\label{tree}
\E\big[(f(X_{1}))^{2}\big]
=\sum_{x}(f(x))^{2}\pi(x).
\end{align}
Computing $\E[{\A}f(X_{1})]$ is more challenging since ${\A}$ and $X_{1}$ are in general correlated if $m\ge1$. 

Let $1_{E}\in\{0,1\}$ denote the indicator function on an event $E$, meaning
\begin{align*}
1_{E}
:=\begin{cases}
1 & \text{if $E$ occurs},\\
0 & \text{otherwise}.
\end{cases}
\end{align*}
Decomposing $\E[{\A}f(X_{1})]$ based on the value of $X_{1}$ gives
\begin{align}\label{tree1}
\E[{\A}f(X_{1})]
=\sum_{x}\E[{\A}f(X_{1})1_{X_{1}=x}]
=\sum_{x}f(x)\E[{\A}1_{X_{1}=x}].
\end{align}
Multiplying \eqref{inv} by the indicator function on the event $X_{1}=x$, taking expectation, and using that $({\A},X_{0})=_{\dd}(A_{1},X_{1})$ yields
\begin{align}\label{c0}
\E[{\A}1_{X_{0}=x}]
=\E[A_{1}1_{X_{1}=x}]
=\alpha\E[{\A}1_{X_{1}=x}]
+f(x)\pi(x),\quad x\in\{0,1\}^{m+1}.
\end{align}

Using the tower property of conditional expectation \cite{durrett2019}, it follows that 
\begin{align}\label{c1}
\E[{\A}1_{X_{1}=x}]
=\sum_{y}\E[{\A}1_{X_{1}=x}1_{X_{0}=y}]
=\sum_{y}\E[{\A}1_{X_{0}=y}]P(y,x),
\end{align}
where $P$ is the transition matrix in \eqref{P}. Combining \eqref{c0} and \eqref{c1} yields the following system of linear algebraic equations for $\E[{\A}1_{X_{0}=x}]$,
\begin{align}\label{c2}
\E[{\A}1_{X_{0}=x}]
=\alpha\sum_{y}\E[{\A}1_{X_{0}=y}]P(y,x)+f(x)\pi(x),\quad x\in\{0,1\}^{m+1}.
\end{align}
If we define the vectors ${{u}},v\in\R^{2^{m+1}}$ by
\begin{align*}
{{u}}(x)
&:=\E[{\A}1_{X_{0}=x}],\quad
v(x)
:=f(x)\pi(x),
\end{align*}
then \eqref{alg2} solves \eqref{c2}. Note that the Perron-Frobenius theorem guarantees that $I-\alpha P^{\top}$ in \eqref{alg2} is invertible since $I-\alpha P^{\top}=\alpha(\alpha^{-1}I-P^{\top})$ and $\alpha\in(0,1)$. Putting this together by combining \eqref{ZZ}, \eqref{alg2} and \eqref{tree0}-\eqref{c0} yields \eqref{secondmoment} in Theorem~\ref{thm12} and completes the proof.
\end{proof}

%%%%%%%%%%%%%%%%%%%%%%%%%%%%%%%%%%%%%%%%%%%%%%%%%%%%%%%%%%%%%%%%%%%%
\subsection{An alternative history process}\label{sectionalt}

The history process in \eqref{hist0} assumes that the patient remembers whether or not they took their medication at the previous $m\ge0$ dosing times. Here, we assume instead that when the patient remembers to take the drug, they know how many consecutive doses they have missed. This modification will allow us to consider the ``all dose'' protocol in \eqref{fall5}.

Define a new history process $\{X_{n}\}_{n\in\Z}$ to encode how much time has passed since the patient last took their medication. Specifically, for integers $n\in\Z$ and $k\ge1$, define
\begin{align}\label{histalt}
X_{n}
&=\begin{cases}
0 & \text{ if }\xi_{n}=\xi_{n-1}=1,\\
k & \text{ if }\xi_{n}=1,\,\xi_{n-1}=\cdots=\xi_{n-k}=0,\,\xi_{n-(k+1)}=1,\\
-k & \text{ if }\xi_{n}=\cdots=\xi_{n-(k-1)}=0,\,\xi_{n-k}=1.
\end{cases}
\end{align}
In words, $X_{n}=0$ if the patient takes the drug at time $n$ and $n-1$, $X_{n}=k\ge1$ if the patient takes the drug at time $n$ after missing the last $k$ doses, and $X_{n}=-k\le-1$ if the patient misses their $k$th consecutive dose at time $n$.

Given that $\{\xi_{n}\}_{n\in\Z}$ are iid as in \eqref{xin}, it follows that $\{X_{n}\}_{n\in\Z}$ is a discrete-time Markov chain on $\mathbb{Z}$ that evolves according to the following transition matrix $P=\{P(x,y)\}_{x,y\in\Z}$ with $P(x,y):=\P(X_{1}=y\,|\,X_{0}=x)$. For $x\in\Z$ and $x\ge0$,
\begin{align}\label{P2}
\begin{split}
P(x,y)
&=\begin{cases}
p & \text{if }y=0,\\
1-p & \text{if }y=-1,\\
0 & \text{otherwise},
\end{cases}
\qquad
P(-x,y)
=\begin{cases}
p & \text{if }y=x,\\
1-p & \text{if }y=-(x+1),\\
0 & \text{otherwise}.
\end{cases}
\end{split}
\end{align}
It is straightforward to check that the distribution of $X_{n}$ is
\begin{align}\label{pi2}
\pi(k)
:=\P(X_{n}=k)
=\begin{cases}
p^{2}(1-p)^{k} & k\ge0,\\
p(1-p)^{|k|} & k\le-1,
\end{cases}
\qquad n,k\in\Z.
\end{align}

For this alternative history process $\{X_{n}\}_{n\in\Z}$ in \eqref{histalt}, a dosing protocol is a function $f:\Z\to[0,\infty)$. Since the state space of $\{X_{n}\}_{n\in\Z}$ (namely $\Z$) is infinite, we assume for technical reasons that dosing protocols cannot grow faster than linearly, \begin{align}\label{bound}
0\le f(k)\le B_{0}|k|+B_{1},\quad k\in\Z,
\end{align}
for some constants $B_{0},B_{1}>0$.

Note that \eqref{bound} and the value of $\pi$ in \eqref{pi2} ensure that all the moments of $X_{n}$ are finite. Furthermore, the bound in \eqref{bound} ensures that the definition of $\A$ in \eqref{Y0} exists almost surely. To see this, note that \eqref{pi2} implies
\begin{align*}
\sum_{n\ge0}\P(|X_{-n}|\ge n)
=\sum_{n\ge0}\sum_{k\ge0}\P(|X_{0}|= n+k)
\le2\sum_{n\ge0}\sum_{k\ge0}p(1-p)^{n+k}
=\frac{2}{p}<\infty.
\end{align*}
Therefore, the Borel-Cantelli lemma \cite{durrett2019} implies that there is an almost surely finite random integer $N_{0}\ge1$ so that $|X_{-n}|<n$ for all $n\ge N_{0}$. Hence, \eqref{bound} implies that $f(X_{-n})\le B_{0}n+B_{1}$ for all $n\ge N_{0}$, and the almost sure existence of $\A$ in \eqref{Y0} follows, as well as the convergence in distribution in \eqref{cd}. Furthermore, the moment convergence in \eqref{momentconvergence} follows from the Lebesgue dominated convergence theorem upon noting that we have almost sure convergence of moments and using some simple bounds on $(C_{M,N}(a,t))^{j}$. 

%%%%%%%

The definitions of $C_{M,N}(a,t)$, $\A$, and $A_{1}$ and the analysis in the section above carry over directly to this definition of $\{X_{n}\}_{n\in\Z}$ if we use the definition of $P$ and $\pi$ in \eqref{P2} and \eqref{pi2}. In particular, the formula for the mean in \eqref{mean} and the formula for the second moment in \eqref{secondmoment} hold. The benefit of the structure of $P$ in \eqref{P2} is that we can solve for ${{u}}$ in \eqref{alg2} in closed form.

To simplify the formulas for ${{u}}$, we take $f(-k)=0$ for all $k\ge1$, which means the patient cannot take medication when they forget. Equation~\eqref{c2} then implies
\begin{align*}
{{u}}(-k)
&=\alpha(1-p){{u}}(-(k-1)) 
,\quad k\ge2,\\
{{u}}(k)
&=\alpha p {{u}}(-k)+f(k)\pi(k)
,\quad k\ge1,\\
{{u}}(0)
&=\alpha p\sum_{k\ge0}{{u}}(0)+f(0)\pi(0),\\
{{u}}(-1)
&=\alpha(1-p)\sum_{k\ge0}{{u}}(k).
\end{align*}
It is straightforward to solve these equations and obtain that
\begin{align*}
{{u}}(-1)
&=\frac{\alpha  (1-p) (1-\alpha  (1-p)) \sum_{k\ge0}f(k)\pi(k)}{1-\alpha},\\
{{u}}(-k)
&=\alpha^{k-1}(1-p)^{k-1}{{u}}(-1)
,\quad k\ge1,\\
{{u}}(k)
&=\alpha^{k}p(1-p)^{k-1}{{u}}(-1)+f(k)\pi(k),\quad k\ge0.
\end{align*}
Plugging into \eqref{secondmoment} yields
\begin{align}\label{secondmomenta}
\begin{split}
\E[Z^{2}]
&=\frac{2\alpha p(1-\alpha(1-p))}{1-\alpha}\Big(\sum_{k\ge0}f(k)\pi(k)\Big)\sum_{k\ge0}\alpha^{k}(1-p)^{k}f(k)+\sum_{k\ge0}(f(k))^{2}\pi(k).
\end{split}
\end{align}

%%%%%%%%%%%%%%%%%%%%%%%%%%%%%%%%%%%%%%%%%%%%%%%%%%%%%%%%%%%%%%%%%%%%%%
\subsection{First and second moments of $Z$}\label{examples}

We now work out the first and second moments of $Z$ for a few different choices of the dosing protocol $f$. We begin by considering the history process in \eqref{hist0} with some given memory parameter $m\ge0$. The simplest case is $m=0$, which corresponds to the patient having no recollection of their behavior at prior dosing times. It is natural to suppose that the patient takes no medication when they forget ($f(0)=0$) and that they take their normal dose when they remember ($f(1)=1$). Using \eqref{mean} and \eqref{secondmoment}, we obtain in this case,
\begin{align}
\begin{split}\label{m0}
\E[Z^{\textup{single}}]
&=p,\quad
\E[(Z^{\textup{single}})^{2}]
=\frac{p (1+\alpha  (2 p-1))}{1+\alpha}.
\end{split}
\end{align}

A more interesting case is $m=1$, which allows the patient to potentially take a higher dose if they missed their prior dose. In this case, we need to specify $f(i,j)$ for $i,j\in\{0,1\}$, where $f(i,j)$ is the dose taken at the $n$th dosing time if $\xi_{n}=j$ and $\xi_{n-1}=i$. Let $f(0,0)=f(1,0)=0$ to impose that the patient must miss their dose when they forget. Further, suppose $f(1,1)=1$, which means the patient takes their normal dose if they remember and they did not miss their prior dose. If $f(0,1)=1+b>0$, then we obtain the ``boost'' dosing protocol in \eqref{boost}, and  \eqref{mean} and \eqref{secondmoment} yield
\begin{align}
\begin{split}\label{m1}
\E[Z^{\textup{boost}}]
&=p (1+b (1-p)),\\
\E[(Z^{\textup{boost}})^{2}]
&=\frac{p}{1+\alpha}
\Big[b^2 (p-1) (\alpha +2 \alpha ^2 (p-1) p-1)\\
&\quad+2 b (1-p) (\alpha  (\alpha  p+p-1)+1)+\alpha  (2 p-1)+1\Big].
\end{split}
\end{align}
Note that the cases $b=0$, $b=1$, $b=\alpha$, and $b=0.5$ correspond respectively to the single, double, fractional, and 1.5 dosing protocols in \eqref{fsingle}, \eqref{fdouble}, \eqref{fractional}, and \eqref{half}.

Computing statistics of $Z$ gets more complicated for larger values of $m$. If $m=2$, then we need to specify $f(i,j,k)$ for $i,j,k\in\{0,1\}$, where $f(i,j,k)$ is the dose taken at the $n$th dosing time if $\xi_{n}=k$, $\xi_{n-1}=j$, and $\xi_{n-2}=i$. We set $f(i,j,0)=0$ for $i,j\in\{0,1\}$, and $f(1,1,1)=1$ by the same reasoning as above. It follows then from \eqref{mean} that the mean amount is
\begin{align}\label{m21}
\E[Z]
=p \big[f_{001} (1-p)^2+p (-p (f_{011}+f_{101})+f_{011}+f_{101}+p)\big],
\end{align}
where we have set $f_{ijk}:=f(i,j,k)$ to simplify notation. We can similarly use \eqref{secondmoment} to obtain a complicated, but explicit formula for $\E[Z^{2}]$, which we omit for simplicity. These formulas allow us to investigate dosing protocols in which the patient takes even higher doses following two missed doses compared to a single missed dose. For example, setting
\begin{align}\label{triplevalues}
f(0,1,1)=1,\quad
f(1,0,1)=2,\quad f(0,0,1)=3,
\end{align}
yields the triple dose protocol in \eqref{ftriple5}.

%%%%%%%%%%%%%%%%%%%%%%%%%%%%%%%%%%%%%%%%%%%%%%%%%%%%%%%%%%%%%%%%%%%%%%%%%%%%%%%%%%%%

We now consider the alternative history process in \eqref{histalt} in order to consider the ``all dose'' dosing protocol in \eqref{fall5}. In this case, using standard results for summing infinite series, \eqref{mean} and \eqref{secondmomenta} yield
\begin{align}\label{minf}
\begin{split}
\E[Z^{\all}]
&=1,\\
\E[(Z^{\all})^{2}]
&=\frac{\alpha ^2 (2-p) (1-p)+\alpha  (p (p+4)-4)-p+2}{p (1+\alpha) (1-\alpha(1-p))}.
\end{split}
\end{align}

%%%%%%%%%%%%%%%%%%%%%%%%%%%%%%%%%%%%%%%%%%%%%%%%%%%%%%%%%%%%%%%%%%%%%%%%%%%%%%%%%%%%%%%%%%%%%%%%%%%%%%%%%%%%
\subsection{Drug {\blue concentration} statistics}

We now use the calculations above to compute pharmacologically relevant statistics. Recall that $\mu=\E[Z]$ in \eqref{mu} compares the mean drug {\blue concentration} to the perfectly adherent patient.

\begin{corollary}\label{cormean}
Using superscripts to denote the dosing protocol, we have that
\begin{align*}
\mu^{\single}
&=p,\quad
\mu^{\double}
=p+p(1-p),\\
\mu^{\textup{boost}}
&=p+bp(1-p),\quad
\mu^{\triple}
=3p-3p^{2}+p^{3},\quad
\mu^{\all}
=1.
\end{align*}
\end{corollary}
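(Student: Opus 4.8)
The plan is to specialize the general mean formula \eqref{mean} from Theorem~\ref{thm12}, namely $\mu=\E[Z]=\sum_{x}f(x)\pi(x)$ with $\pi$ the explicit stationary distribution in \eqref{pi} (or \eqref{pi2} for the all dose protocol), to each of the five dosing protocols in turn. Since \eqref{mean} is already proved in full generality, every formula in the corollary reduces to summing a short finite series (or a geometric-type series in the all dose case), so the work is entirely bookkeeping rather than new analysis; indeed several of these means have already appeared in \eqref{m0}, \eqref{m1}, \eqref{m21}, and \eqref{minf}.

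First, for the single dose protocol I would take memory parameter $m=0$, $f(0)=0$, $f(1)=1$, so that $\pi(0)=1-p$, $\pi(1)=p$ and \eqref{mean} gives $\mu^{\single}=p$, as in \eqref{m0}. For the boost protocol I would take $m=1$ with $f(0,0)=f(1,0)=0$, $f(1,1)=1$, $f(0,1)=1+b$; the only nonvanishing terms of \eqref{mean} are $(1+b)\pi(0,1)+\pi(1,1)=(1+b)p(1-p)+p^{2}=p+bp(1-p)$, recovering \eqref{m1}, and setting $b=1$ yields $\mu^{\double}=p+p(1-p)$.

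For the triple dose protocol I would use $m=2$ with $f(i,j,0)=0$, $f(1,1,1)=1$, and the values $f(0,0,1)=3$, $f(0,1,1)=1$, $f(1,0,1)=2$ from \eqref{triplevalues}; only the four states ending in $1$ contribute, and substituting into the $m=2$ mean \eqref{m21} (equivalently, evaluating $3p(1-p)^{2}+p^{2}(1-p)+2p^{2}(1-p)+p^{3}$) gives $\mu^{\triple}=p\big(3(1-p)^{2}+3p(1-p)+p^{2}\big)=3p-3p^{2}+p^{3}$. Finally, for the all dose protocol I would switch to the alternative history process of Section~\ref{sectionalt}, where $f(k)=k+1$ for $k\ge0$ and $f(-k)=0$ for $k\ge1$; this obeys the linear growth bound \eqref{bound}, so \eqref{mean} applies with $\pi(k)=p^{2}(1-p)^{k}$ for $k\ge0$, and $\mu^{\all}=\sum_{k\ge0}(k+1)p^{2}(1-p)^{k}=p^{2}(1-(1-p))^{-2}=1$.

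The only step needing any care is the all dose case, where one must note that \eqref{bound} holds (so that $A$ in \eqref{AB0} is almost surely finite and \eqref{mean} is legitimate for the infinite state space) and correctly evaluate $\sum_{k\ge0}(k+1)x^{k}=(1-x)^{-2}$ at $x=1-p$; everything else is a direct substitution into identities already established. I therefore anticipate no genuine obstacle beyond transcribing each protocol's defining dose values correctly into \eqref{mean}, \eqref{m21}, and \eqref{minf}.
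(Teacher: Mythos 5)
Your proposal is correct and follows essentially the same route as the paper: the paper obtains these means by specializing the general formula \eqref{mean} of Theorem~\ref{thm12} to each protocol (as recorded in \eqref{m0}, \eqref{m1}, \eqref{m21} with \eqref{triplevalues}, and \eqref{minf}), which is exactly your substitution-and-summation argument, including the switch to the alternative history process and the check of \eqref{bound} for the all dose case. All five computations check out.
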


%%%%%%%%%%%%%%%%%%%%%
\begin{figure}[t]%[htp]
\centering
\includegraphics[width=\linewidth]{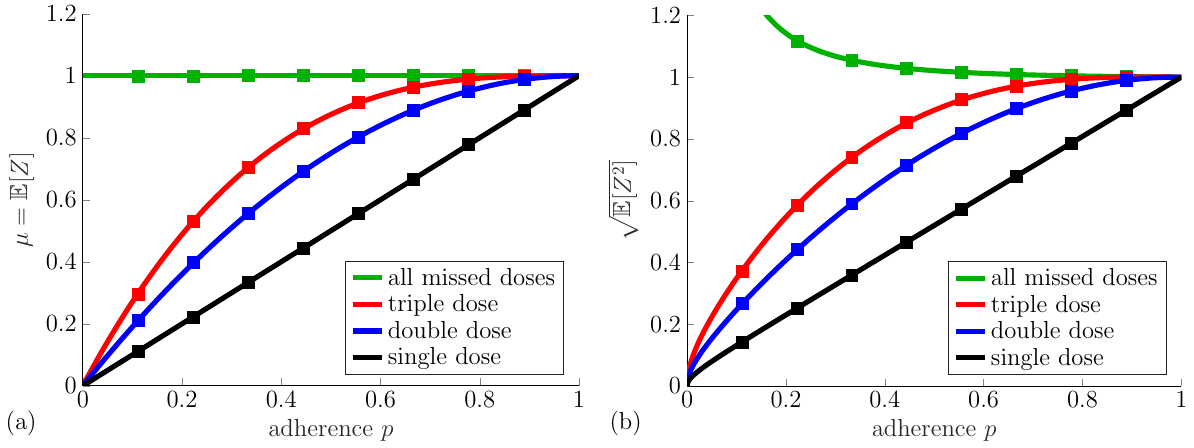}
\caption{The plots compare the mean (panel (a)) and second moment (panel (b)) computed from stochastic simulations (square markers) to the exact analytical formulas as functions of the adherence $p$ for various dosing protocols. We take $\alpha:=e^{-\ke\tau}=0.85$.}
\label{figtest}
\end{figure}
%%%%%%%%%%%%%%%%%%%%%

In Figure~\ref{figtest}a, we plot $\mu$ as a function of $p$ for various dosing protocols. Notice that nonadherence causes a reduction in the average drug {\blue concentration}s for the single, double, and triple dose protocols since $\mu^{\single}$, $\mu^{\double}$, and $\mu^{\triple}$ are all strictly less than 1 for all $p\in(0,1)$. Naturally, taking more doses following missed doses increases the average drug {\blue concentration}, and thus
\begin{align*}
\mu^{\single}<\mu^{\double}<\mu^{\triple}<\mu^{\all},\quad \text{for all }p\in(0,1).
\end{align*}
However, notice that while $\mu^{\double}$ is much larger than $\mu^{\single}$, the additional increase is relatively small for the triple and all dose protocols, as $\mu^{\double}\approx\mu^{\triple}\approx\mu^{\all}
$ if the adherence $p$ is not too small.

In Figure~\ref{figtest}b, we plot $\sqrt{\E[Z^{2}]}$ as a function of $p$ for various dosing protocols. The curves in Figure~\ref{figtest} use the analytical formulas for $\E[Z]$ and $\E[Z^{2}]$ obtained from Theorem~\ref{thm12} and the squares markers are results from stochastic simulations with $\alpha=0.85$. In particular, the square markers are obtained from $10^{5}$ independent realizations of $C_{M,N}(a,0)$ with $a=M=0$ and $N=100$. The simulation results agree with the exact analytical results.

To measure the variability in drug {\blue concentration}s that stems from imperfect adherence, we introduce the coefficient of variation of $Z$,
\begin{align}\label{cvauc}
\cv
:=\frac{\sqrt{\E[(Z-\E[Z])^{2}]}}{\E[Z]}
{\blue=\frac{\sqrt{\E[(\AUC-\E[\AUC])^{2}]}}{\E[\AUC]}
=\frac{\sqrt{\E[(C(t)-\E[C(t)])^{2}]}}{\E[C(t)]}},
\end{align}
which is defined as the ratio of the standard deviation to the mean. Notice that \eqref{Z} implies that the coefficient of variation of $Z$ is equal to the coefficient of variation of $\AUC$ or $C(t)$ for any $t\in[0,\tau)$. Applying Theorem~\ref{thm12} gives explicit formulas for the coefficient of variation for the single dose and double dose protocols, which we give in the following corollary (the formulas for the other dosing protocols are omitted for brevity). 

\begin{corollary}\label{cv}
Using superscripts to denote the dosing protocol, we have that
\begin{align*}
\cv^{\single}
&=\sqrt{\frac{1-\alpha}{1+\alpha}}\sqrt{p(1-p)},\\
\cv^{\double}
&=\cv^{\single}\sqrt{4-3 p+p^2-2 p  (2-p) \alpha}.
\end{align*}
\end{corollary}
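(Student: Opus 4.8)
The plan is to obtain both formulas by substituting the already-computed first and second moments of $Z$ into the definition of the coefficient of variation, $\cv=\sqrt{\E[Z^{2}]-(\E[Z])^{2}}\,/\,\E[Z]$. Everything needed is in hand: $\E[Z^{\single}]=p$ and $\E[Z^{\double}]=p(2-p)$ come from Corollary~\ref{cormean}, while $\E[(Z^{\single})^{2}]$ is \eqref{m0} and $\E[(Z^{\double})^{2}]$ is \eqref{m1} evaluated at $b=1$ (the double dose protocol being the boost protocol with $b=1$). Hence no further probabilistic input is required beyond Theorem~\ref{thm12}; the proof reduces to a finite algebraic simplification, and the substantive content is the variance of $Z$ in each case.

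For the single dose protocol I would form $\E[(Z^{\single})^{2}]-(\E[Z^{\single}])^{2}=\frac{p(1+\alpha(2p-1))}{1+\alpha}-p^{2}$ and factor the numerator, which collapses to $\frac{(1-\alpha)\,p(1-p)}{1+\alpha}$, whence the stated $\cv^{\single}$ follows. It is worth recording a check that bypasses \eqref{m0} entirely: since for the single dose protocol $f(X_{n})=\xi_{n}$, the random variable $A=\sum_{n\ge0}\alpha^{n}\xi_{-n}$ is a sum of \emph{independent} terms, so $\E[A]=\frac{p}{1-\alpha}$ and $\mathrm{Var}(A)=p(1-p)\sum_{n\ge0}\alpha^{2n}=\frac{p(1-p)}{1-\alpha^{2}}$; since $Z=(1-\alpha)A$ is just a rescaling this reproduces the same $\cv^{\single}$, and confirms the simplification.

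For the double dose protocol I would put $b=1$ in \eqref{m1}, subtract $(\E[Z^{\double}])^{2}=p^{2}(2-p)^{2}$, and exhibit the factorization
\[
\E[(Z^{\double})^{2}]-(\E[Z^{\double}])^{2}=\frac{(1-\alpha)\,p(1-p)}{1+\alpha}\,\big(4-3p+p^{2}-2p(2-p)\alpha\big),
\]
i.e.\ the double dose variance equals the single dose variance times the displayed polynomial correction; taking square roots and dividing through by $\E[Z^{\double}]$ then yields $\cv^{\double}=\cv^{\single}\sqrt{4-3p+p^{2}-2p(2-p)\alpha}$. Concretely, after clearing the common denominator $1+\alpha$, the numerator of the variance is a polynomial in $(\alpha,p)$ — quadratic in $\alpha$ — that I would expand, collect by powers of $\alpha$, and verify equals $(1-\alpha)\,p(1-p)\,(4-3p+p^{2}-2p(2-p)\alpha)$.

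The only real obstacle is the bookkeeping in this last step: the boost-protocol second moment in \eqref{m1} is a genuinely messy polynomial in $(\alpha,p,b)$, so evaluating at $b=1$, subtracting the square of the mean, and confirming the product form is error-prone by hand — a symbolic algebra check is the efficient route. There is no conceptual difficulty, however: once the cofactor $4-3p+p^{2}-2p(2-p)\alpha$ is guessed, polynomial division confirms it, and two cheap consistency checks catch arithmetic slips — at $\alpha\to0$ the variances collapse to those of the single-step variables $\xi_{0}$ and $f^{\double}(X_{0})$, and at $p\to1$ every variance, hence every $\cv$, must vanish.
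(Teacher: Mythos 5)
Your route is exactly the paper's: the paper offers no proof of this corollary beyond ``applying Theorem~\ref{thm12},'' i.e.\ substituting the moments \eqref{m0} and \eqref{m1} (at $b=1$) into the definition \eqref{cvauc}, and your variance computations are correct --- indeed
\begin{align*}
\mathrm{Var}(Z^{\single})=\frac{(1-\alpha)p(1-p)}{1+\alpha},\qquad
\mathrm{Var}(Z^{\double})=\mathrm{Var}(Z^{\single})\bigl(4-3p+p^{2}-2p(2-p)\alpha\bigr).
\end{align*}
The gap is in the final step of each calculation, where you assert that dividing by the mean ``yields the stated formula.'' It does not. The displayed right-hand sides of the corollary are the standard deviations $\sqrt{\mathrm{Var}(Z)}$, not the coefficients of variation as defined in \eqref{cvauc}: dividing $\sqrt{\mathrm{Var}(Z^{\single})}$ by $\E[Z^{\single}]=p$ gives $\sqrt{\tfrac{1-\alpha}{1+\alpha}}\sqrt{\tfrac{1-p}{p}}$, not $\sqrt{\tfrac{1-\alpha}{1+\alpha}}\sqrt{p(1-p)}$; and since $\E[Z^{\double}]=p(2-p)\neq\E[Z^{\single}]$, the ratio of the two coefficients of variation is $\tfrac{1}{2-p}\sqrt{4-3p+p^{2}-2p(2-p)\alpha}$ rather than the bare square root. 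So either the corollary as printed contains a typo (its formulas are the standard deviations of $Z$, which your algebra does establish), or your proof must carry the divisions by $p$ and by $p(2-p)$ through explicitly and will then land on expressions different from those displayed; as written, the proof asserts an implication that fails. Your own $\alpha\to0$ sanity check would have exposed this had you applied it to the coefficient of variation rather than to the variance: the coefficient of variation of the Bernoulli variable $\xi_{0}$ is $\sqrt{(1-p)/p}$, whereas the stated formula degenerates to $\sqrt{p(1-p)}$.
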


The coefficient of variation measures the variability induced by nonadherence by measuring how drug {\blue concentration}s deviate from their average value. From a pharmacological standpoint, a small coefficient of variation is desirable. However, a small coefficient of variation does not necessarily imply that the effects of nonadherence are small. Indeed, the coefficient of variation vanishes if the patient never takes their medication ($p=0$).

Hence, a more useful statistic for measuring the effects of nonadherence is how drug {\blue concentration}s deviate from the drug {\blue concentration}s of a perfectly adherent patient, which is the deviation $\dev$ defined in \eqref{deviation0}. Applying Theorem~\ref{thm12} gives explicit formulas for the deviation $\dev$ for different dosing protocols, which we give in the following corollary. 

\begin{corollary}\label{useful}
The deviation in \eqref{deviation0} for the boost protocol in \eqref{boost} for any $b\ge0$ is
\begin{align*}
\dev^{\textup{boost}}
&=\sqrt{\frac{1-p}{1+\alpha}}\sqrt{\alpha +p \left(b^2+2 \alpha ^2 b p (1+b-
bp)-\alpha  (b (b-2 p+4)+2)\right)+1}.
\end{align*}
Note that setting $b=0$, $b=1$, $b=\alpha$, and $b=0.5$ yield the respective deviations for the single, double, fractional, and 1.5 dosing protocols in \eqref{fsingle}, \eqref{fdouble}, \eqref{fractional}, and \eqref{half}. The deviation for the triple dose protocol in \eqref{ftriple5} is
\begin{align}
\dev^{\triple}
&=\sqrt{\frac{1-p}{1+\alpha}}\Big[1-3 p^2+4 p+2 \alpha ^3 (1-p) p^2 ((p-3) p+3)\\
&\quad+2 \alpha ^2 p^2 ((p-3) p+3)+\alpha  \left(1-2 p^3+11 p^2-14 p\right)\Big]^{1/2}.
\end{align}
The deviation for the all dose protocol in \eqref{fall5} is
\begin{align*}
\dev^{\all}
&=\sqrt{\frac{1-p}{1+\alpha}}\sqrt{\frac{2(1-\alpha)^{2}}{1-p (\alpha  (1-p))}}.
\end{align*}
\end{corollary}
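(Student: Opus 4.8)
My plan is to reduce everything to the elementary identity
\[
\dev^{2}=\E[(Z-1)^{2}]=\E[Z^{2}]-2\E[Z]+1,
\]
which holds for any dosing protocol. Each of the three claimed formulas then follows by inserting explicit expressions for $\E[Z]$ and $\E[Z^{2}]$ for the relevant protocol and simplifying. Every moment I need is an instance of Theorem~\ref{thm12} (or of its variant \eqref{secondmomenta} for the alternative history process), so the work splits into: identifying the correct memory parameter $m$ and dosing function $f$, solving the linear system \eqref{alg2} for the auxiliary vector $u$, and polynomial simplification.

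For the boost protocol in \eqref{boost} I would take $m=1$ with $f(0,0)=f(1,0)=0$, $f(1,1)=1$, $f(0,1)=1+b$, which is exactly the case already recorded in \eqref{m1}. So I would substitute $\E[Z^{\textup{boost}}]$ and $\E[(Z^{\textup{boost}})^{2}]$ from \eqref{m1} into the identity above, clear the denominator $1+\alpha$, collect in powers of $b$, and factor out $1-p$; matching the result against the stated expression is then a routine polynomial identity. Specializing $b\in\{0,1,\alpha,0.5\}$ recovers the single, double, fractional, and $1.5$ dose formulas with no extra work.

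For the triple dose protocol \eqref{ftriple5} the memory parameter is $m=2$, with $f$ vanishing whenever $\xi_{n}=0$ and $f(1,1,1)=f(0,1,1)=1$, $f(1,0,1)=2$, $f(0,0,1)=3$ as in \eqref{triplevalues}. The mean is \eqref{m21}; the second moment is the real work, since it needs the solution of \eqref{alg2}, i.e.\ the $8\times 8$ system $(I-\alpha P^{\top})u=v$ on $\{0,1\}^{3}$. This is the step I expect to be the main obstacle. The saving grace is that $P$ is very sparse --- each row has just the two entries $p$ and $1-p$, reflecting the single fresh Bernoulli draw --- so I would organize the unknowns $u(x)=\E[A\,1_{X_{0}=x}]$ by the last coordinate $x_{0}$ and back-substitute down the chain of states produced by consecutive misses, reducing to a small solvable system. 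Feeding the resulting $u$ into \eqref{secondmoment}, combining with \eqref{m21}, and forming $\E[Z^{2}]-2\E[Z]+1$ should then give $\dev^{\triple}$ after pulling out $(1-p)/(1+\alpha)$.

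For the all dose protocol \eqref{fall5} the framework \eqref{hist0} does not apply, so I would use the alternative history process of Section~\ref{sectionalt} together with its closed-form second moment \eqref{secondmomenta}. With $f(k)=k+1$ for $k\ge 0$ and $f(-k)=0$ for $k\ge 1$, the sums $\sum_{k\ge0}\alpha^{k}(1-p)^{k}f(k)$ and $\sum_{k\ge 0}(f(k))^{2}\pi(k)$ reduce to standard evaluations of $\sum k r^{k}$ and $\sum k^{2}r^{k}$, yielding $\E[Z^{\all}]=1$ and the stated $\E[(Z^{\all})^{2}]$. Since $\E[Z^{\all}]=1$, I have $\dev^{\all}=\sqrt{\E[(Z^{\all})^{2}]-1}$; subtracting $1$ from \eqref{minf}, putting everything over the common denominator $p(1+\alpha)(1-\alpha(1-p))$, and simplifying the numerator (which should collapse to $2p(1-p)(1-\alpha)^{2}$) gives the claimed expression. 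Overall none of the three computations goes beyond linear and polynomial algebra; the $8\times 8$ solve for the triple dose protocol is the only place that needs genuine care.
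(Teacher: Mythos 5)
Your proposal follows the paper's own route exactly: the paper obtains Corollary~\ref{useful} by combining the identity $\dev^{2}=\E[Z^{2}]-2\E[Z]+1$ with the moment formulas already assembled in the appendix --- \eqref{m1} for the boost protocol, the $m=2$ instance of Theorem~\ref{thm12} with the values \eqref{triplevalues} for the triple dose protocol, and \eqref{secondmomenta}/\eqref{minf} for the all dose protocol --- so nothing in your plan differs in substance, and the only genuinely laborious step (the sparse $8\times8$ solve of \eqref{alg2} for $m=2$) is handled the same way. One arithmetic caution on the all dose case: starting from \eqref{minf}, the numerator of $\E[(Z^{\all})^{2}]-1$ over the common denominator $p(1+\alpha)(1-\alpha(1-p))$ collapses to $2(1-p)(1-\alpha)^{2}$, not $2p(1-p)(1-\alpha)^{2}$, which yields $\dev^{\all}=\sqrt{\tfrac{1-p}{1+\alpha}}\sqrt{\tfrac{2(1-\alpha)^{2}}{p(1-\alpha(1-p))}}$; the direct check at $\alpha\to0$ (where $Z=f(X_{0})$ gives $\dev^{2}=2(1-p)/p$) confirms this form, so the denominator $1-p(\alpha(1-p))$ printed in the corollary appears to be a transposition typo rather than an indication that your method fails.
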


For certain drugs, it is important to ensure that the dosing protocol cannot cause the drug {\blue concentration} in the patient to rise too high. We thus consider $\lambda$ in \eqref{lambda0}, which is the largest possible drug {\blue concentration} compared to the perfectly adherent patient. The following theorem calculates $\lambda$ for the dosing protocols above.

\begin{theorem}\label{largest}
Using superscripts to denote the dosing protocol, we have that
\begin{align*}
\lambda^{\textup{boost}}
&=\max\Big\{1,\frac{1+b}{1+\alpha}\Big\},\quad
\lambda^{\triple}
=\frac{3}{1+\alpha+\alpha^{2}},\quad
\lambda^{\all}
=\infty.
\end{align*}
Note that setting $b=0$, $b=1$, $b=\alpha$, and $b=0.5$ corresponds respectively to the single, double, fractional, and 1.5 dosing protocols in \eqref{fsingle}, \eqref{fdouble}, \eqref{fractional}, and \eqref{half}.
\end{theorem}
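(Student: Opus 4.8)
The plan is to reduce everything to the random variable $A=\sum_{n\ge0}\alpha^{n}f(X_{-n})$ of \eqref{AB0}, since \eqref{ZZ} gives $Z=(1-\alpha)A$ and hence $\lambda=\sup_{\xi}Z=(1-\alpha)\sup_{\xi}A$, where the supremum runs over all $\{0,1\}$-sequences (note $A$ depends only on $\xi_{0},\xi_{-1},\xi_{-2},\dots$). The all dose protocol is then immediate: if the patient misses the $k$ dosing times $-1,\dots,-k$ and takes the drug at times $-k-1$ and $0$, then $f^{\all}(X_{0})=k+1$, so $A\ge k+1$ and $Z\ge(1-\alpha)(k+1)\to\infty$ as $k\to\infty$; hence $\lambda^{\all}=\infty$.

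For the boost ($m=1$) and triple dose ($m=2$) protocols I would compute $\sup_{\xi}A$ by dynamic programming on the history chain $\{X_{n}\}$ of \eqref{hist0}. Consider the Bellman system $W_{x}=f(x)+\alpha\max_{x'}W_{x'}$ over $x\in\{0,1\}^{m+1}$, the inner maximum being over ``predecessor'' states $x'$ (those with $P(x',x)>0$); since $\alpha<1$ this has a unique nonnegative solution $W$. The key is a telescoping bound: for any realization $\{\xi_{n}\}$, the state $X_{-n-1}$ is a predecessor of $X_{-n}$, so rearranging the Bellman equation gives $f(X_{-n})=W_{X_{-n}}-\alpha\max_{x'}W_{x'}\le W_{X_{-n}}-\alpha W_{X_{-n-1}}$, whence $\sum_{n=0}^{N}\alpha^{n}f(X_{-n})\le W_{X_{0}}-\alpha^{N+1}W_{X_{-N-1}}\le\max_{x}W_{x}$, and therefore $A\le\max_{x}W_{x}$. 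Conversely, starting from $x^{\ast}\in\arg\max_{x}W_{x}$ and repeatedly stepping to a predecessor attaining the Bellman maximum produces a valid (eventually periodic) sequence along which every inequality above is an equality, so $A=\max_{x}W_{x}$ for that sequence. Hence $\lambda=(1-\alpha)\max_{x}W_{x}$, and it remains only to solve a $2^{m+1}\times2^{m+1}$ system.

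For the boost protocol the relevant doses are $f(0,1)=1+b$ (the unique ``boosted'' state), $f(1,1)=1$, and $f(\cdot,0)=0$, and the transition graph forces the state $(0,1)$ to be preceded by a dose-$0$ state. Solving the $4\times4$ system---which involves a single case split on the sign of $b-\alpha$ inside the piecewise-linear maxima---gives $\max_{x}W_{x}=\max\bigl\{\tfrac{1}{1-\alpha},\tfrac{1+b}{1-\alpha^{2}}\bigr\}$, realized respectively by ``take the drug at every dosing time'' and ``alternate between missing a dose and taking $1+b$ doses,'' so that $\lambda^{\textup{boost}}=\max\bigl\{1,\tfrac{1+b}{1+\alpha}\bigr\}$. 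For the triple dose protocol the dose-$3$ state $(0,0,1)$ must be followed into the past by two dose-$0$ states, so the fastest cycle returning to it is $(0,0,1)\to(1,0,0)\to(0,1,0)\to(0,0,1)$, of discounted value $3/(1-\alpha^{3})$; solving the $8\times8$ system confirms that this cycle beats the all-ones cycle (value $1/(1-\alpha)$) and the dose-$2$ two-cycle (value $2/(1-\alpha^{2})$) for every $\alpha\in(0,1)$, giving $\lambda^{\triple}=3/(1+\alpha+\alpha^{2})$. The special cases $b\in\{0,1,\alpha,0.5\}$ then follow by substitution.

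The main obstacle is conceptual rather than computational: one must be sure the supremum is genuinely captured by these short periodic patterns and not improved by some cleverly tuned aperiodic sequence. The supersolution/telescoping argument is precisely what rules this out, reducing the question to the two finite Bellman systems, whose solution is routine once the predecessor graph is drawn and the piecewise-linear maxima are handled carefully.
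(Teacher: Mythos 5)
Your proposal is correct, and it reaches the same formulas as the paper by a genuinely more systematic route. The paper's proof is a direct extremal-pattern argument: it exhibits the periodic sequence achieving the claimed value (alternating take/miss for the double dose, every third dose for the triple dose), and for the upper bound it verifies a local invariance/monotonicity property --- starting from the candidate maximal concentration, any admissible continuation either returns to that value or falls below it --- handling the boost and triple protocols by the remark that the argument is ``almost identical.'' Your Bellman fixed-point formulation $W_{x}=f(x)+\alpha\max_{x'}W_{x'}$ over predecessor states, with the telescoping supersolution bound $\sum_{n=0}^{N}\alpha^{n}f(X_{-n})\le W_{X_{0}}-\alpha^{N+1}W_{X_{-N-1}}$ and the greedy backward construction attaining equality, is a rigorous and protocol-independent version of exactly that invariance argument: it proves once and for all that the supremum over arbitrary (possibly aperiodic) dosing histories is captured by an eventually periodic pattern, which is the point the paper treats somewhat informally. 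What your approach buys is generality and a clean separation between the structural claim (the sup equals $(1-\alpha)\max_{x}W_{x}$) and the finite computation; what it costs is having to actually solve the $4\times4$ and $8\times8$ piecewise-linear systems, and for the triple dose protocol you assert rather than display that the period-three cycle wins, so a referee would want that short verification (e.g.\ $3/(1+\alpha+\alpha^{2})\ge 2/(1+\alpha)$ follows from $(1-\alpha)(1+2\alpha)\ge0$) written out. The all dose case is identical in both treatments.
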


Notice that if we set $b=\alpha$ in the boost protocol in \eqref{boost}, then $\lambda^{\textup{boost}}=1$ and thus Theorem~\ref{largest} ensures that a patient following the boost protocol with $b=\alpha$ will never have more drug in their body than the perfectly adherent patient.  

We note that Theorem~\ref{maintext} in the main text follows immediately from Corollaries~\ref{cormean} and \ref{useful} and Theorem~\ref{largest}.

%%%%%%%%%%%%%%%%%%%%%%%%%%%%%%%%%%%%%

\begin{proof}[Proof of Theorem~\ref{largest}]
For the single dose protocol, it is immediate that $\lambda^{\single}=1$, and this corresponds to a patient who never misses a dose.

For the double dose protocol, observe that if $\xi_{2n}=1$ and $\xi_{2n+1}=0$ for all $n\in\Z$, then $A=\frac{2}{1-\alpha^{2}}$, and thus
\begin{align}\label{lb}
\lambda^{\double}
\ge\frac{2/(1-\alpha^{2})}{{A^{\perf}}}
=\frac{2}{1+\alpha}.
\end{align}
This describes a patient who misses a dose at every odd dosing time, and thus always takes a double dose at even dosing times.

To see that $\lambda^{\double}\le\frac{2}{1+\alpha}$, suppose that the patient has concentration $\frac{2}{1-\alpha^{2}}\frac{DF}{V}$ just after dosing time $n=1$. If they take the drug at dosing time $n=2$, then the concentration in their body will be lower than $\frac{2}{1-\alpha^{2}}\frac{DF}{V}$ since
\begin{align*}
\alpha\frac{2}{1-\alpha^{2}}+1<\frac{2}{1-\alpha^{2}}.
\end{align*}
Hence, suppose they miss taking the drug at dosing time $n=2$. If they take the drug at dosing time $n=3$, then they will take a double dose and the concentration in their body will return to $\frac{2}{1-\alpha^{2}}\frac{DF}{V}$. If they miss taking the drug at dosing time $n=3$, then the concentration will be even lower. Therefore, $\lambda^{\double}\le\frac{2}{1+\alpha}$, which upon combining with \eqref{lb} yields $\lambda^{\double}=\frac{2}{1+\alpha}$.

The proof that $\lambda^{\textup{boost}}=\max\{1,\frac{1+b}{1+\alpha}\}$ is almost identical to the proof that $\lambda^{\double}=\frac{2}{1+\alpha}$. The proof that $\lambda^{\triple}=\frac{3}{1+\alpha+\alpha^{2}}$ is also almost identical, upon noting that this value of $\lambda^{\triple}$ is attained by a patient who takes medication at every third dosing time.

The proof for the ``all dose'' protocol follows from noting that if the patient misses $k$ consecutive doses and then takes the next dose, then the drug concentration in their body just after that dose is at least $(k+1)\frac{DF}{V}$. Since this is true for every positive integer $k$, the result $\lambda^{\all}=\infty$ follows.
\end{proof}

}

%%%%%%%%%%%%%%%%%%%%%%%%%%%%%%%%%%%%%%%%%%%%%%%%%%%%%%%%%%%%%%%%%%%%%%%%%%%%%%%%%%%%%%%%%%%%%%%%%%%%%%%%%%%%%%%%%%%%%%%%%%%%%%%%%%%%%%%%%%%%%%%%%%%%%%%%%%%%%%%%%%%%%%%%%%%%%%%%%%%%%%%%%%%%%%%%%%%%%%%%%%%%%%%%%%%%%%%%%%%%%%%%%%%%%%%%%%%
\subsubsection*{Acknowledgments}
SDL was supported by the National Science Foundation (Grant Nos.\ DMS-1944574 and DMS-1814832). SDL thanks Jennifer Babin and Colt Schisler for helpful discussions.

%%%%%%%%%%%%%%%%%%%%%%%%%%%%%%%%%%%%%%%%%%%%%%%%%%%%%%%%%%%%%%%%%%%%%%%%%%%%%%%%%%%%%%%%%%%%%%%%%%%%%%%%%%%%%%%%%%%%%%%%%%%%%%%%%%%%%%%%%%%%%%%%%%%%%%%%%%%%%%%%%%%%%%%%%%%%%%

% Create the reference section using BibTeX:
\bibliography{library.bib}
\bibliographystyle{unsrt}

%%%%%%%%%%%%%%%%%%%%%%%%%%%%%%%%%%%%%%%%%%%%%%%%%%%%%%%%%%%%%%%%%%%%%%%%%%%%%%%%%%%%%%%%%%%%%%%%%%%%%%%%%%%%%%%%%%%%
\end{document}